\newtheorem{corollary}{Corollary}
\newtheorem{lemma}{Lemma}
\theoremstyle{definition}
\begin{document}
\title{Rate Splitting in MIMO RIS-assisted Systems with Hardware Impairments and Improper Signaling}

\author{Mohammad Soleymani$^*$, 
Ignacio Santamaria$^\dag$ \emph{Senior Member, IEEE}, and
Eduard Jorswieck$^\ddag$ \emph{Fellow, IEEE}
 \\ \thanks{ 
%Copyright (c) 2015 IEEE. Personal use of this material is permitted. However, permission to use this material for any other purposes must be obtained from the IEEE by sending a request to \protect\url{pubs-permissions@ieee.org}.
$^*$Mohammad Soleymani is with the Signal and System Theory Group, Universit\"at Paderborn, , 33100 Paderborn, Germany   
(e-mail: \protect\url{mohammad.soleymani@sst.upb.de}).  

$^\dag$Ignacio Santamaria is with the Department of Communications Engineering, University of Cantabria, 39005 Santander, Spain (e-mail: \protect\url{i.santamaria@unican.es}).

$^\ddag$ Eduard Jorswieck is with the Institute for Communications Technology, Technische Universit\"at Braunschweig, 38106 Braunschweig, Germany
(e-mail: \protect\url{jorswieck@ifn.ing.tu-bs.de})

The work of Ignacio Santamaria has been partly  supported by the project ADELE PID2019-104958RB-C43, funded by MCIN/ AEI
/10.13039/501100011033.

The work of Eduard Jorswieck was supported in part by the Federal Ministry of Education and Research (BMBF, Germany) in the program of ``Souver\"an. Digital. Vernetzt.'' joint project 6G-RIC, project identification number: 16KISK020K and 16KISK031.
}}
\maketitle
\begin{abstract}
In this paper, we propose an optimization framework for rate splitting (RS) techniques in multiple-input multiple-output (MIMO) reconfigurable intelligent surface (RIS)-assisted systems, possibly with I/Q imbalance (IQI). This framework can be applied to any optimization problem in which the objective and/or constraints are linear functions of the rates and/or transmit covariance matrices. Such problems include minimum-weighted and weighted-sum rate maximization, total power minimization for a target rate, minimum-weighted energy efficiency (EE) and global EE maximization.  The framework may be applied to any interference-limited system with hardware impairments. For the sake of illustration, we consider a multicell MIMO RIS-assisted broadcast channel  (BC) in which the base stations (BSs) and/or the users may suffer from IQI. Since IQI generates improper noise, we consider improper Gaussian signaling (IGS) as an interference-management technique that can additionally compensate for IQI. We show that RS when combined with IGS can substantially improve the spectral and energy efficiency of overloaded networks (i.e., when the number of users per cell is larger than the number of transmit/receive antennas).
\end{abstract} 
\begin{IEEEkeywords}
 Energy efficiency,  improper Gaussian signaling, majorization minimization, MIMO broadcast channels, power minimization, rate splitting, reflecting intelligent surface, spectral efficiency.
\end{IEEEkeywords}

\section{Introduction}
Rate splitting (RS) and reconfigurable intelligent surfaces (RISs) are among  the most promising technologies for beyond 5G (B5G) and 6G, and have been shown to be able to substantially improve the spectral and energy efficiency of various wireless communication systems \cite{mao2022rate,     wu2021intelligent, di2020smart}.
RS is a powerful interference-management technique, which includes a variety of schemes/technologies such as treating interference as noise (TIN), non-orthogonal multiple access (NOMA) techniques, space division multiple access (SDMA), multicasting, broadcasting, among others \cite{mao2022rate}.  
Additionally, the use of RIS is an emerging trend in wireless technologies for improving the coverage and/or manage/neutralize interference by modulating channels \cite{wu2021intelligent}. 

In this paper, we provide a general optimization framework for RS in multiple-input, multiple-output (MIMO) RIS-assisted systems and investigate the performance of RS in such systems. 

\subsection{Literature review}
One of the main bottlenecks for modern wireless communication systems is interference from other users such that these systems are mostly interference-limited \cite{andrews2014will}. Hence, interference-management techniques are expected to continue playing an essential role in the upcoming wireless communication systems. RS is a practical and flexible  signaling scheme that includes various schemes/technologies such as TIN, NOMA, SDMA, multicasting, broadcasting, etc. \cite{mao2022rate}. Indeed, RS can get the benefits of all theses schemes and switch between them depending on the channel conditions and interference level. 
When interference is weak, TIN is the optimal decoding strategy for maximizing the sum rate \cite{annapureddy2009gaussian}. Under some conditions on the strength of the desired and interference links, TIN is also optimal in terms of  the generalized degrees of freedom \cite{ geng2015optimality}. In the presence of strong interference, decoding and canceling interference from the received signal is the optimal strategy \cite{sato1981capacity}, which is also known as successive interference cancellation (SIC). RS bridges both strategies as it may apply each depending on the level of interference. In RS, there are two types of messages: common and private. Common messages are decoded by all users while treating the private messages as noise. However, the private messages are decoded only by the intended user, employing SIC to remove common messages from the received signal. It is worth noting that for the operational points between these two extreme cases, i.e., weak and strong interference, the optimal strategy for every interference-limited system is not known. In other words, RS is not necessarily the optimal transmission/decoding strategy for all operational points \cite{el2011network}. 

 RS has received a lot of attention in the past few years  \cite{mao2022rate, hao2015rate, lu2017mmse, clerckx2019rate,  zhou2021rate, li2020rate,  mao2021rate,
mao2020beyond,flores2020linear,li2021linearly,flores2021tomlinson, dizdar2021rate,
 yang2012degrees, joudeh2016robust, hao2017achievable, mao2018rate, bansal2021rate, li2022rate}; however, it is not a new technology. The terminology of ``RS multiple access'' (RSMA) was introduced in \cite{rimoldi1996rate} for the first time in the literature, where the authors showed that RS may enlarge the rate region of a single-input, single-output (SISO) multiple-access channel (MAC). The main idea of RS is even older and was introduced by Carleial in \cite{carleial1978interference} for the 2-user interference channel (IC), where it was shown that RS can enlarge the achievable rate region.
A survey on RS is provided in \cite{mao2022rate}, and we refer the reader to  \cite[Sec. II]{mao2022rate} for a detailed literature review on RSMA. In this section, we only briefly discuss some advantages of RSMA. Note that RSMA is more general than NOMA  and, in fact, includes NOMA as a particular case. In NOMA, users are firstly ordered, and then, each user employs SIC to decode and cancel the signal of the users with a lower order. Hence,  user ordering plays a key role in NOMA, and finding the optimal user ordering can be a very difficult task in RIS-assisted MIMO systems. NOMA is optimal in SISO systems with perfect channel state information (CSI) and without RIS \cite{el2011network}. The channels are degraded  in SISO systems, which makes it easier to obtain the optimal user ordering. However, MIMO channels are not degraded and hence, NOMA is suboptimal  for MIMO systems and might be very inefficient  \cite{clerckx2021noma}. The optimal user ordering can be very challenging in multiple-antenna RIS-assisted systems and may even require solving an NP-hard problem \cite{ni2021resource}. 
It is also in general infeasible to obtain the optimal user ordering through exhaustive search since there are $K!$ user ordering possibilities. Furthermore, the optimal user ordering problem may be further complicated in the presence of imperfect CSI. Unlike NOMA, RSMA does not need any user ordering and has been shown to be robust against imperfect CSI \cite{yang2012degrees, joudeh2016robust}. The performance of RSMA in MIMO systems has to be developed further, but the existing literature shows that RSMA may improve the spectral efficiency of 2-user MIMO IC and/or BC \cite{hao2017achievable, li2021linearly}. To summarize, RSMA can be adapted to the interference level and is robust against imperfect CSI. Additionally, RSMA can be very efficient in multiple-antenna systems. 

Another promising technology for 6G is RIS, which has been shown to  significantly improve the spectral and energy efficiency of various wireless communication systems \cite{wu2021intelligent, di2020smart, huang2020holographic, huang2019reconfigurable, wu2019intelligent, kammoun2020asymptotic, yu2020joint,  yang2020risofdm, pan2020multicell, zhang2020intelligent, elmossallamy2020reconfigurable, ni2021resource, wei2021channel, zhou2020framework}. 
RIS can modulate the channels, thus providing another degree-of-freedom to improve the coverage and/or manage interference. The papers \cite{huang2019reconfigurable, kammoun2020asymptotic, yu2020joint, pan2020multicell, yu2020improper, soleymani2022improper, soleymani2022noma} showed that RIS can improve the performance of single-cell and/or multi-cell BCs. The paper \cite{jiang2021achievable, huang2020achievable} showed that RIS can enlarge the rate region of the $K$-user multiple-input, single-output (MISO) ICs. The authors in \cite{jiang2022interference} have employed RIS to neutralize interference in the $K$-user SISO IC. In \cite{li2021intelligent}, it is shown that RIS can improve the performance of orthogonal-frequency-division-multiple-access (OFDMA) systems. We refer the reader to \cite{wu2021intelligent, di2020smart} for a more detailed literature review on RIS.

In addition to RS and RIS, there are other interference-management tools such as improper Gaussian signaling (IGS) \cite{cadambe2010interference, javed2018improper, soleymani2020improper, gaafar2017underlay, amin2017overlay, soleymani2019energy, lameiro2015benefits, lameiro2017rate, lagen2016superiority, lagen2016coexisting,  zeng2013transmit, ho2012improper, soleymani2019improper,  soleymani2019robust, soleymani2019ergodic, Sole1909:Energy, tuan2019non, nasir2020signal, nasir2019improper, nguyen2021improper, yu2020improper, yu2020improper2, park2013sinr, yu2021maximizing, javed2020journey}. In a zero-mean proper complex Gaussian signal, the real and imaginary parts of the signal are independent and identically distributed (iid). However, the real and imaginary parts  of improper signals can be correlated and/or have unequal powers \cite{schreier2010statistical}. Indeed, relaxing the assumption that the transmit signals  are proper, it is possible to exploit another degree-of-freedom in the design by considering the real and imaginary parts of each transmitted signal as independent optimization dimensions. 
IGS has been shown to increase the DoF of the 3-user SISO IC for the first time in the literature in \cite{cadambe2010interference}. Later, IGS was employed to improve the spectral and/or energy efficiency of various interference-limited systems such as multi-user ICs \cite{zeng2013transmit, ho2012improper, soleymani2019improper,  soleymani2019robust, soleymani2019ergodic, Sole1909:Energy}, cognitive radio \cite{gaafar2017underlay, amin2017overlay, soleymani2019energy, lameiro2015benefits}, BCs \cite{yu2020improper, soleymani2022improper, soleymani2022noma}, among others. We refer the reader to \cite{javed2020journey} for a more comprehensive literature review on improper signaling in both Gaussian signals and discrete constellations.

Interference is not, unfortunately, the only factor limiting the performance of wireless systems. There are also other factors that limit performance such as hardware impairments (HWI). The papers \cite{hao2015rate, lu2017mmse, clerckx2019rate, mao2020beyond, flores2020linear, zhou2021rate, li2020rate, mao2021rate, flores2021tomlinson, dizdar2021rate, carleial1978interference, han1981new, rimoldi1996rate} considered the performance of RS with ideal devices. However, in practice, devices always have non-idealities that, if not adequately compensated for, can severely affect the system performance  \cite{boshkovska2018power,  soleymani2019improper, soleymani2020improper, soleymani2022improper, soleymani2020rate, javed2018improper, javed2019multiple, boulogeorgos2016energy, papazafeiropoulos2017rate}. 
A source of imperfection in devices is I/Q imbalance (IQI), caused  by an amplitude and phase mismatch between the in-phase and quadrature components \cite{javed2019multiple, soleymani2020improper, boulogeorgos2016energy}. IQI is modeled as a widely linear transformation  of the input signal, which makes the output  (transmitted) signal improper \cite{javed2019multiple}.  A way to compensate for the IQI is to employ IGS, which further motivates us to study IGS in this work \cite{soleymani2020improper, soleymani2022improper}.

\begin{table}
\centering
\footnotesize
\caption{A brief comparison of the most related works.}\label{table-1}
\begin{tabular}{|c|c|c|c|c|c|c|c|c|c|c|c|c|c|c|}
	\hline
&IGS&RS&RIS&MIMO&STAR-RIS&HWI
 \\
\hline
  This paper&$\surd$&$\surd$&$\surd$&$\surd$&$\surd$&$\surd$\\
\hline
   \cite{soleymani2022improper}&$\surd$&&$\surd$&$\surd$&&$\surd$\\
\hline
    \cite{soleymani2022noma, yu2020joint, yu2021maximizing}
&$\surd$&&$\surd$&&&
\\
\hline
\cite{javed2018improper, soleymani2019improper}&$\surd$&&&&&$\surd$
\\
\hline
\cite{ho2012improper,   soleymani2019robust, soleymani2019ergodic} 
&$\surd$&&&&&
\\
\hline
\cite{soleymani2020improper}&$\surd$&&&$\surd$&&$\surd$
\\
\hline
\cite{lagen2016superiority, lagen2016coexisting}
&$\surd$&&&$\surd$&&
\\
\hline
\cite{ huang2019reconfigurable, kammoun2020asymptotic, wu2019intelligent}
&&&$\surd$&&&
\\
\hline
\cite{ pan2020multicell, zhang2020intelligent}
&&&$\surd$&$\surd$&&
\\
\hline
\cite{hao2015rate, lu2017mmse, clerckx2019rate,  zhou2021rate, li2020rate,  mao2021rate }
&&$\surd$&&&&
\\
\hline
\cite{ mao2020beyond,flores2020linear,li2021linearly,flores2021tomlinson, dizdar2021rate}
&&$\surd$&&$\surd$&&
\\
\hline
\cite{bansal2021rate,li2022rate}
&&$\surd$&$\surd$&&&
\\
\hline
\cite{papazafeiropoulos2017rate}&&$\surd$&&&&$\surd$
\\
\hline
\cite{mu2021simultaneously, wu2021coverage, liu2021star, xu2021star}
&&&$\surd$&&$\surd$&
\\
\hline
		\end{tabular}
\normalsize
\end{table} 

\subsection{Motivation}
In Table \ref{table-1}, we provide a brief comparison of some of the most related works, based on the considered scenarios.  As can be observed in the table, even though RS is not a new concept/technology, further studies are needed to analyze its performance, especially in MIMO RIS-assisted systems and/or in the presence of imperfect devices. Additionally, it may be interesting to evaluate the performance of RS when used in combination with another interference-management technique such as IGS.  Despite the vast recent literature on the topic, we believe this is the first work to study RS, jointly with IGS, in MIMO RIS-assisted systems.

IGS is  a powerful interference-management tool, which can be beneficial in RSMA  systems.
In our previous study \cite{soleymani2022noma}, we have proposed a NOMA-based IGS scheme for a multicell MISO RIS-assisted BC showing that IGS and NOMA can be mutually beneficial interference-management techniques. As indicated, NOMA is suboptimal in MIMO systems. 
Furthermore, NOMA involves high complexities to obtain the optimal user ordering in multicell and/or RIS-assisted systems, even in a MISO scenario, which makes it very complicated to implement a full NOMA  scheme among the cell users. Thus, it can be challenging (and inefficient) to extend the scheme in \cite{soleymani2022noma} to MIMO systems. 
In \cite{soleymani2022improper}, we proposed IGS schemes for multicell MIMO RIS-assisted BCs with TIN. Unfortunately, TIN can be highly suboptimal in strong interference. Thus, the performance of IGS can be highly improved if combined with another interference-management technique, especially in overloaded systems as shown in \cite{soleymani2022noma}. 
These  previous findings motivate us to study schemes with RS and IGS for MIMO RIS-assisted systems with HWI. 

Another motivation for this work is to propose a general framework to solve complicated optimization problems for RSMA in MIMO RIS-assisted systems with HWI. Such a framework could  be applied to any interference-limited systems and include different RS schemes. 
\subsection{Contributions}
In this paper, we propose a general framework for RSMA in MIMO RIS-assisted systems to solve a rich class of optimization problems in which the objective function and/or constraints are linear functions of the rates and/or transmit covariance matrices. These optimization problems include weighted-minimum-rate and weighted-sum-rate maximization, power minimization for a target rate, globally EE (GEE) maximization, and minimum-weighted-EE maximization, among others that are vastly employed in the literature \cite{zeng2013transmit, Sole1909:Energy, tuan2019non, nasir2020signal, nasir2019improper, yu2020improper, soleymani2021distributed}. The optimization framework employs majorization minimization (MM) combined with alternating optimization (AO) to optimize the transmit covariance matrices and RIS components.  We first introduce the framework and then, present some examples that illustrate how it can be specialized to solve different optimization problems.   

We make realistic assumptions regarding devices and assume that the transceivers may suffer from IQI, based on the model in \cite{javed2019multiple}. 
Additionally, we study practical scenarios for RIS by employing an appropriate model for the large-scale and small-scale fading of the links as well as considering four different feasibility sets for RIS components based on \cite{wu2021intelligent}. Furthermore, we consider simultaneous transmit and reflect (STAR) RISs and show that STAR-RIS can considerably outperform regular RIS. Supported by  experimental results \cite{docomo2020docomo, zhang2021dynamical, wang2018simultaneous}, STAR-RIS is able not only to reflect signals, but also to transmit.
As shown in this work, in some scenarios, STAR-RIS can improve the spectral efficiency of the network, outperforming regular RIS.

The numerical results show that IGS with RS can significantly increase the minimum rate and EE of users for a given power budget, or reduce the total power consumption for a given target rate in overloaded systems, i.e., when the number of users per cell is higher than the number of transmit/receive antennas.  Additionally, our results show that RS and RIS are mutually beneficial tools  to improve performance in overloaded systems. In this case, RS aims at managing interference, while RIS aims at improving the coverage, and they complement each other very well. Furthermore, our results indicate that IQI may considerably decrease the system performance if we do not account for it in the system design.  

The main contributions of this paper are summarized as follows:
\begin{itemize}
\item We propose a general framework for RSMA, which can be applied to MIMO RIS-assisted interference-limited networks. This framework is capable of solving any optimization problem in which the objective and/or constraints are linear functions of the rates and/or transmit covariance matrices. 
 
\item For the sake of illustration, we consider the downlink of a multicell MIMO RIS-assisted BC. Then, we formulate and solve a variety of optimization problems for the system, including weighted-minimum rate and weighted-sum rate maximization, weighted-minimum EE and global EE maximization, power minimization for a target rate.

\item We consider realistic assumptions regarding devices and RISs. We additionally extend our framework to MIMO STAR-RIS assisted systems. To the best of our knowledge, this is the first work on RS in STAR-RIS assisted systems.

\item We show that RS with IGS can substantially improve the spectral and energy efficiency of overloaded interference-limited systems. The more overloaded the system is, the more benefits are provided by RS/IGS. In other words, the benefits of RS/IGS increase with number of users per cell and decrease with the number transmit/receive antennas.  
\end{itemize}

\subsection{Paper outline}
This paper is organized as follows. Section \ref{sec-sys} describes the system model and formulates the optimization problem addressed in this work. 
Section \ref{sec-cov-opt} presents the solution for optimizing the transmit covariance matrices. 
Section \ref{sec=opt-ris} extends the framework to MIMO RIS-assisted systems. 
Section \ref{sec-num} presents some numerical results, and finally, Section \ref{sec-con} concludes the paper. Additionally, we provide some  preliminaries on improper signaling,  majorization minimization, and RIS in appendices.

\section{System model}\label{sec-sys}
\begin{figure}[t!]
    \centering
\includegraphics[width=.4\textwidth]{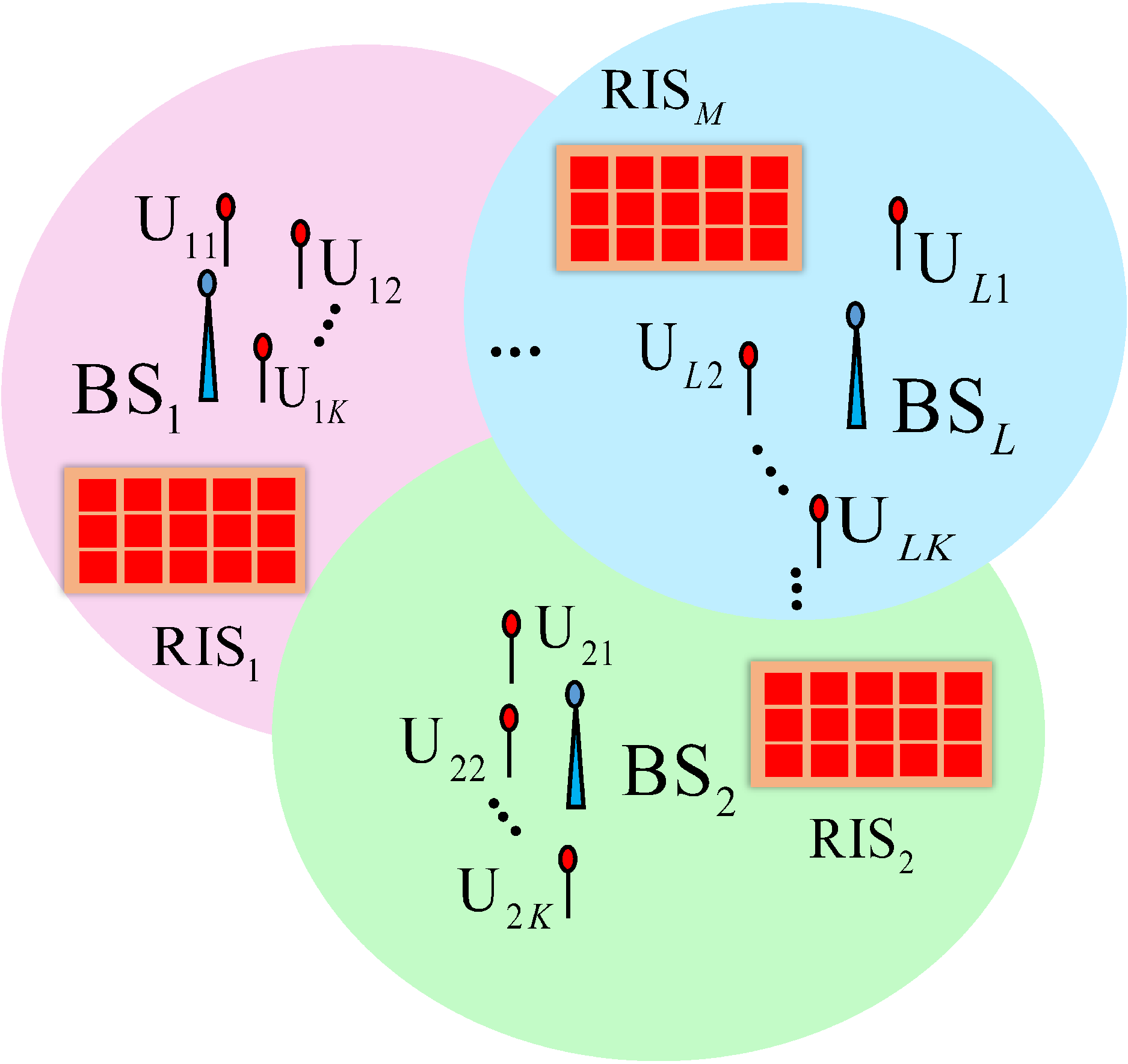}
     \caption{A multicell broadcast channel with RIS.}
	\label{Fig-sys-model}
\end{figure}
Our proposed framework can be applied to any MIMO RIS-assisted interference-limited system with IQI at the transceivers. As a representative example of these networks, we consider a  multicell MIMO BC with at least one RIS per cell shown in Fig. \ref{Fig-sys-model}, 
which is among the most practical scenarios \cite{pan2020multicell, shi2011iterative}. 
The system consists of $L$ BSs with $N_{BS}$ antennas, each serving $K$ users with $N_u$ antennas. All BSs and users may suffer from IQI, according to the model in \cite{javed2019multiple}, which is shortly presented in Appendix \ref{sec-iqi}. There are also $M\geq L$ RISs with $N_{RIS}$ components each, which assist the BSs to improve the spectral and/or energy efficiency of the system. 
In \cite{soleymani2022improper}, we showed that a distributed RIS implementation may outperform a collocated RIS implementation due to the harsh large-scale fading of RIS links. Thus, we assume that there is at least one RIS per cell.  To simplify the notation and the presentation of results, we consider a symmetric scenario in which each BS/user/RIS has the same number of antennas/components, and all devices have the same level of IQI. However, our framework can be easily extended to asymmetric systems with any arbitrary number of users per  cell, with different number of antennas at each BS/user/RIS and with different IQI parameters at each device as will be indicated later.

\subsection{RIS model}

In this paper, we employ the RIS model in \cite{pan2020multicell} for the MIMO multicell BC. For the sake of completeness, we briefly restate the model in this subsection and refer the readers to   Appendix \ref{sec=ap=ris} and \cite{wu2021intelligent, di2020smart} for more detailed discussions on the small-scale and large-scale fading models of  RIS-assisted systems, the feasibility sets for RIS components and the format of STAR-RISs. 
In RIS-assisted systems, there are two possible links between a transmitter and receiver: a direct link and a link through RISs. The direct link is given and cannot be optimized. However, the links through RISs can be optimized by modifying the reflecting coefficients. 

The channel matrix between BS $i$ and the $k$-th user associated to BS $l$, denoted as u$_{lk}$, is given by   
\begin{equation}\label{ch-equ}
\mathbf{H}_{lk,i}\!\left(\{\bm{\Theta}\}\right)\!\!=\!\!
\underbrace{\sum_{m=1}^M\mathbf{G}_{lk,m}\bm{\Theta}_m\mathbf{G}_{m,i}}_{\text{Link through RIS}}
+\!\!\!
\underbrace{\mathbf{F}_{lk,i}}_{\text{Direct link}}
\!\!\!\!
\in\mathbb{C}^{N_u\times N_{BS}}
\!\!,
\end{equation}
where $\mathbf{F}_{lk,i}\in\mathbb{C}^{N_u\times N_{BS}}$ is the channel matrix between the BS $i$ and u$_{lk}$,  
$\mathbf{G}_{lk,m}\in\mathbb{C}^{N_u\times N_{RIS}}$ is the channel matrix between the $m$-th RIS and u$_{lk}$, 
$\mathbf{G}_{m,i}\in\mathbb{C}^{N_{RIS}\times N_{BS}}$ is the channel matrix between the BS $i$ and the $m$-th RIS, $\{\bm{\Theta}\}=\{\bm{\Theta}_m\}_{m=1}^M$ denotes the set of all reflecting coefficients, where $\bm{\Theta}_m\in\mathbb{C}^{N_{RIS}\times N_{RIS}}$ is a diagonal matrix, containing the vector of reflecting coefficients for the $m$-th RIS 
\begin{equation*}
\bm{\Theta}_m=\text{diag}\left(\theta_{m1}, \theta_{m2},\cdots,\theta_{m{N_{RIS}}}\right),
\end{equation*}
where $\theta_{mn}$s for all $m,n$ are complex-valued optimization parameters. 
In this way, the channel matrices are functions of the reflecting coefficients. We represent the feasibility set of the reflecting coefficients by $\mathcal{T}$, unless we explicitly refer to one of the feasibility sets in Appendix \ref{sec=ap=ris}. To simplify the representation of equations, we drop the dependency of channels on the reflecting coefficients and denote them as $\mathbf{H}_{lk,i}$ for all $l,k,i$, hereafter. Note that we can easily apply the channel model in \eqref{ch-equ} to asymmetric scenarios in which each BS/user/RIS has a different number of antennas/components. In this case, $\mathbf{H}_{lk,i}\in\mathbb{C}^{N_{u,lk}\times N_{BS,i}}$, where $N_{u,lk}$ and $N_{BS,i}$ are, respectively, the number of antennas at u$_{lk}$ or BS $i$ in an asymmetric scenario. Additionally, note that we consider perfect, instantaneous and global CSI similar to many other studies on RIS such as \cite{ni2021resource, huang2019reconfigurable, wu2019intelligent, kammoun2020asymptotic, pan2020multicell, zhang2020intelligent, zuo2020resource, mu2020exploiting, yang2021reconfigurable, yu2020joint, jiang2022interference}. Such an assumption provides an upper bound for the performance of the proposed techniques while showing potential tradeoffs in the system's performance.
However, in practice, it may happen that we only have an access to noisy estimates of the channels and/or  statistical CSI especially in RIS-assisted multi-user systems \cite{zhou2020framework, wei2021channel}.  Investigating the performance of RS and IGS with imperfect and/or statistical CSI poses numerous challenges that would require a separate study and treatment, which should be considered in a future work.

\subsection{Signal model}
We assume that BSs employ the 1-layer RS scheme to transmit signals to the users.
Note that there are various RS schemes such as 1-layer RS, 2-layer hierarchical RS (HRS), generalized RS \cite[Section III.B]{mao2022rate}. 1-layer RS is not necessarily the optimal RS scheme; however, we consider this scheme since it is very practical and the most widely studied in the literature \cite{hao2015rate, lu2017mmse, clerckx2019rate, mao2020beyond, flores2020linear, zhou2021rate, li2020rate, mao2021rate, flores2021tomlinson, dizdar2021rate}. 
Nevertheless, the proposed framework can be applied to multi-layer or to generalized RS since the rate expressions for all these schemes have similar structure. Due to  space restrictions, we leave multi-layer RS schemes for a future study. 

In the 1-layer RS scheme, the transmit signal of BS $l$ consists of two parts. One part contains a common message, which is decoded by all users associated to that BS. The other part contains private messages that can be decoded only by the intended user. Hence, the transmit signal of  BS $l$ employing 1-layer RS is 
\begin{equation*}
\mathbf{x}_l=\mathbf{x}_{l,c}+\sum_{k=1}^K\mathbf{x}_{lk}\in\mathbb{C}^{N_{BS}\times 1}, 
\end{equation*}
where $\mathbf{x}_{l,c}\in\mathbb{C}^{N_{BS}\times 1}$ is the common message, and $\mathbf{x}_{lk}\in\mathbb{C}^{N_{BS}\times 1}$ is the transmit  signal of BS $l$ intended for its $k$th associated user, i.e., u$_{lk}$. The signals $\mathbf{x}_{l,c}$ and $\mathbf{x}_{lk}$s are zero-mean uncorrelated improper Gaussian random vectors. The common message of  BS $l$ is decoded by all its associated users; however, $\mathbf{x}_{lk}$ is decoded only by u$_{lk}$. 
Moreover, each transceiver may suffer from IQI, following the model described in Appendix \ref{sec-iqi}. Employing Lemma \ref{lem:Rea} in Appendix \ref{sec-iqi}, the received signal at u$_{lk}$ is
\begin{multline}
\underline{\mathbf{y}}_{lk}=\sum_{i=1}^L\underline{\mathbf{H}}_{lk,i}
\sum_{j=1}^K\underline{\mathbf{x}}_{ij}+\underline{\mathbf{n}}_{lk}
=
\underbrace{\underline{\mathbf{H}}_{lk,l}
\underline{\mathbf{x}}_{l,c}}_{\text{Common  message}}+
\underbrace{\underline{\mathbf{H}}_{lk,l}
\underline{\mathbf{x}}_{lk}}_{\text{Private message}}
\\
+
\underbrace{\underline{\mathbf{H}}_{lk,l}
\sum_{j=1,j\neq k}^{K}\underline{\mathbf{x}}_{lj}}_{\text{Intracell interference}}
+
\underbrace{ \sum_{i=1,i\neq l}^L\underline{\mathbf{H}}_{lk,i}
\underline{\mathbf{x}}_{i}}_{\text{Intercell interference}}+
\underbrace{\underline{\mathbf{n}}_{lk}}_{\text{Noise}},
\end{multline}
where 
$\underline{\mathbf{H}}_{lk,i}
\in\mathbb{R}^{2N_u\times 2N_{BS}}$ for all $i,l,k$ is the equivalent channel given by Lemma \ref{lem:Rea}. Note that $\underline{\mathbf{x}}_{l},\underline{\mathbf{x}}_{l,c},\underline{\mathbf{x}}_{lj}\in\mathbb{R}^{2N_{BS}\times 1}$ for all $l,j$ are  $2N_{BS}\times 1$ real vectors stacking the real and imaginary parts of ${\mathbf{x}}_{l}$, ${\mathbf{x}}_{l,c}$, and ${\mathbf{x}}_{lj}$, respectively. Additionally, note that if IQI parameters are not the same at all transceivers, we can still apply Lemma \ref{lem:Rea} to easily obtain the equivalent channel for each individual links. 

We represent the transmit covariance matrix of  $\underline{\mathbf{x}}_{l}$, $\underline{\mathbf{x}}_{l,c}$, and $\underline{\mathbf{x}}_{lk}$by ${\bf P}_{l}$, ${\bf P}_{l,c}$ and ${\bf P}_{lk}$, respectively. Note that $\mathbf{P}_{l}=\mathbf{P}_{l,c}+\sum_{k}\mathbf{P}_{lk}.$ 
Moreover, we represent the feasibility set for IGS by 
\begin{equation}%{multline}
\mathcal{P}_{I}\!=\!
\left\{
\mathbf{P}_{lk},\mathbf{P}_{l,c}\!:\!
\text{Tr}\left(\mathbf{P}_{l}\right)\leq p_l, \mathbf{P}_{lk}\succcurlyeq\mathbf{0}, \mathbf{P}_{l,c}\succcurlyeq\mathbf{0}, \forall l,k\right\}\!,
\end{equation}%{multline}
where $p_l$ is the power budget of BS $l$. 
Note that the transmit covariance matrices can be  arbitrary positive semi-definite real matrices for IGS. However, for PGS, these matrices are structured as described in \cite[Eq. (5)]{soleymani2022improper} since the real and imaginary parts of the signal are iid in this case. Therefore, the feasibility set for PGS is 
\begin{multline}%{equation}
\mathcal{P}_{P}=
\{%\{\mathbf{P}_{lk},\mathbf{P}_{l,c}\}_{\forall l,k}\!:\!\!
\mathbf{P}_{lk},\mathbf{P}_{l,c}:\text{Tr}\left(\mathbf{P}_{l}\right)\leq p_l, \mathbf{P}_{lk}\in\mathcal{P}_t,\\%=\mathbf{P}_{t},\\
\mathbf{P}_{l,c}\in\mathcal{P}_t,%=\mathbf{P}_{t}^\prime,
\mathbf{P}_{lk}\succcurlyeq\mathbf{0}, \mathbf{P}_{l,c}\succcurlyeq\mathbf{0},\forall l,k\},
\end{multline}%{equation}
where %$\mathbf{P}_{t}$ and $\mathbf{P}_{t}^\prime$ 
$\mathcal{P}_t$ is the set of all semi-definite positive real matrices that fulfill the structure in \cite[Eq. (5)]{soleymani2022improper}. 
Since our framework is general and can be applied to both PGS and IGS, we represent the feasibility set for transmit covariance matrices by $\mathcal{P}$ for notational simplicity.

\subsection{Rate and energy efficiency expressions}\label{sec-ii-c}
We assume that the intercell interference is treated as noise, and we employ the RS technique to partly manage the intracell interference. In the 1-layer RS, each user first decodes the common message while treating all other signals as noise  \cite{mao2022rate,mishra2021rate,clerckx2016rate}.  Then, each user subtracts the common message from the received signal,  and then decodes its private message treating the remaining intracell interference as well as intercell interference as noise \cite{mao2022rate,mishra2021rate,clerckx2016rate}.
\subsubsection{Rate expressions} 
The achievable rate of u$_{lk}$ is the summation of the  rate of decoding its private message and its dedicated rate from the common message, i.e.,
\begin{equation}\label{rate-eq}
r_{lk}=r_{lk,c}+r_{lk,p},
\end{equation}
 where $r_{lk,p}$ is the rate of decoding $\mathbf{x}_{lk}$ at u$_{lk}$ after decoding and canceling $\mathbf{x}_{l,c}$ and treating $\mathbf{x}_{li}$ for $i\neq k$ as noise \cite[Eqs. (2)-(3)]{mishra2021rate}
\begin{align}\label{eq-28}
r_{lk,p}=\!\frac{1}{2}\log_2\left|{\bf I}+
{\bf D}_{lk}^{-1}
 \underline{{\bf H}}_{lk,l}
{\bf P}_{lk}\underline{{\bf H}}_{lk,l}^T  
\right|\!=\!r_{lk,p_1}\!-r_{lk,p_2}\!,
\end{align}
where ${\bf P}_{lk}$ is the transmit covariance matrix of  $\underline{{\bf x}}_{lk}$, and 
\begin{align}\label{ne=10}
r_{lk,p_1}\!\!&=\!\frac{1}{2}\!\log_2\!\left|{\bf D}_{lk}
\!+\!
\underline{{\bf H}}_{lk,l}
{\bf P}_{lk}\underline{{\bf H}}_{lk,l}^T
\right|\!,
r_{lk,p_2}\!\!=\!\frac{1}{2}\!\log_2\!\left|{\bf D}_{lk}
\right|\!,\!
\\
\mathbf{D}_{lk}
&=
\underbrace{
\sum_{i=1,i \neq l}^L\underline{\mathbf{H}}_{lk,i}
\mathbf{P}_{i}\underline{\mathbf{H}}_{lk,i}^T
}_{\text{Intercell interference}}
+
\underbrace{
\sum_{j= 1,j\neq k}^{K}
\underline{\mathbf{H}}_{lk,l}
\mathbf{P}_{lj}
\underline{\mathbf{H}}_{lk,l}^T
}_{\text{Intracell interference}}
+
\underbrace{
\underline{\mathbf{C}}_{n}}_{\text{Noise}},
\end{align}
where $\underline{\mathbf{C}}_{n}$ is the noise variance given by Lemma \ref{lem:Rea}.
  Moreover, $r_{lk,c}$ is the portion of the decoding rate of common message allocated to u$_{lk}$. We denote the common rate at BS $l$ as $r_l$, which is given by 
\begin{equation}\label{fis=}
r_l=\sum_{k=1}^{K}r_{lk,c}\leq \min_{
k}\left\{\bar{r}_{lk,c}\right\}\triangleq r_{l,c},
\end{equation}
 where $\bar{r}_{lk,c}$ is the maximum decodable rate of the common message at u$_{lk}$ treating the other signals as noise \cite[Eqs. (2)-(3)]{mishra2021rate}
\begin{align}%\label{eq-28}
\nonumber
\bar{r}_{lk,c}&=\frac{1}{2}\log_2\left|{\bf I}+
\left({\bf D}_{lk}+\underline{{\bf H}}_{lk,l}
{\bf P}_{lk}\underline{{\bf H}}_{lk,l}^T\right)^{-1}
 \underline{{\bf H}}_{lk,l}
{\bf P}_{l,c}\underline{{\bf H}}_{lk,l}^T  
\right|
\\ &\nonumber
=
\underbrace{
\frac{1}{2}\log_2\left|{\bf D}_{lk}+\underline{{\bf H}}_{lk,l}
{\bf P}_{lk}\underline{{\bf H}}_{lk,l}^T
+
\underline{{\bf H}}_{lk,l}
{\bf P}_{l,c}\underline{{\bf H}}_{lk,l}^T
\right|
}_{\bar{r}_{lk,c_1}
}
\\ &
\hspace{1cm}
-
\underbrace{
\frac{1}{2}\log_2\left|{\bf D}_{lk}+\underline{{\bf H}}_{lk,l}
{\bf P}_{lk}\underline{{\bf H}}_{lk,l}^T
\right|}
_{\bar{r}_{lk,c_2}
}.
\label{eq=29=}
\end{align}
Note that the common message of BS $l$ should be decodable by all its associated users. Thus, the rate has to be equal to (or less than) $\min_{%\forall 
k}\left\{\bar{r}_{lk,c}\right\}$ to ensure the decodability of the common message for all users associated to the BS. It is worth emphasizing that  the rates $r_{lk,c}$ are actually optimization parameters, and it may happen that a user receives all its rate through the common message, while another user receives all its rate through the private message. We will discuss the possible solutions of $r_{lk,c}$ in the next section with more details. 

\subsubsection{Energy efficiency metrics}
The global EE (GEE) is defined as the total achievable rate (throughput) of the system, divided by the total power consumption, i.e., \cite{zappone2015energy}
\begin{equation}
GEE=\frac{\sum_{\forall l,k}r_{lk}}{LKp_c+\eta\sum_{l}\text{Tr}\left(\mathbf{P}_{l}\right)}, 
\end{equation}
where $\eta^{-1}$ is the power efficiency of each BS, and $p_c$ is the constant power consumption in the network for transmitting data to a user, which is given by \cite[Eq. (27)]{soleymani2022improper}. GEE is a metric for the performance of the whole network. Alternatively,  the EE of each user is defined as \cite{zappone2015energy}
\begin{equation}
EE_{lk}=\frac{r_{lk}}{p_c+\eta\text{Tr}\left(\mathbf{P}_{lk}\right)+\frac{\eta}{K}\text{Tr}\left(\mathbf{P}_{l,c}\right)}.
\end{equation}

\subsection{Problem statement}
In this subsection, we formulate a general optimization problem for RS in MIMO RIS-assisted systems as 
\begin{subequations}\label{ar-opt}
\begin{align}
\label{12-a}
 \underset{\{\mathbf{P}\}\in\mathcal{P},\{\bm{\Theta}\}\in\mathcal{T},\mathbf{r}_c
 }{\max}\,\,\,\,\,\,\,\,  & 
  f_0\left(\left\{\mathbf{P}\right\},\{\bm{\Theta}\}\right) &\\
 \,\,\,\,\,\,\,\,\,\,\,\, \,\, \text{s.t.}   \,\,\,\,\,\,\,\,\,\,&  f_i\left(\left\{\mathbf{P}\right\},\{\bm{\Theta}\}\right)\geq0,&\forall i,
\\
\label{4-c}
 &
r_{lk}\geq r_{lk}^{th},
&\forall l,k,
\\\label{4-b}
 &  \sum_{k=1}^Kr_{lk,c}\leq r_{l,c}(\{\mathbf{P}\}),
&\forall l,k,
\\
\label{4-d}
&
r_{lk,c}\geq 0,
&\forall l,k,
 \end{align}
\end{subequations}
where $\mathbf{r}_c=\left\{r_{lk,c},\forall l,k\right\}$ is the vector of rates for the common message, $f_i$s are linear functions of rates/EEs and/or concave/convex/linear functions of transmit covariance matrices,
  \eqref{4-c} is the rate constraint, \eqref{4-b} is the decodability constraint,  \eqref{4-d} is due to the fact that the rates cannot be negative, and
  $\mathbf{r}_c$, $\{\mathbf{P}\}$ and $\{\bm{\Theta}\}$ are the optimization parameters. 
  The function $f_0\left(\left\{\mathbf{P}\right\}\right)$ in \eqref{12-a} is the considered utility function, which can be, for instance, the [weighted] sum rate, minimum [weighted] rate, global EE and so on. 
  If we want to minimize a cost function, $f_0\left(\left\{\mathbf{P}\right\}\right)$ returns a negative value as mentioned below. 
Note that
 $f_i$  can include, for example, an energy harvesting constraint and/or the so-called interference temperature (in cognitive radio systems) since these constraints are linear/affine in $\{\mathbf{P}\}$. It is worth emphasizing  that  $r_{l,c}(\{\mathbf{P}\})=\min_{%\forall 
k}\left\{\bar{r}_{lk,c}(\{\mathbf{P}\})\right\}$ is a function of $\{\mathbf{P}\}$. For notational simplicity, we drop this dependency hereafter.

The general optimization problem \eqref{ar-opt} includes the minimum-weighted-rate maximization (MWRM), weighted-sum-rate maximization (WSRM),  minimum-weighted-EE maximization (MWEEM), global EE maximization (GEEM) problems, among others. Additionally, it is possible to minimize the transmission power for a target rate  by choosing $f_0\left(\left\{\mathbf{P}\right\}\right)=-\sum_{%\forall 
l}\text{Tr}\left(\mathbf{P}_l\right)$ in \eqref{ar-opt}.

\section{General framework for systems without RIS}\label{sec-cov-opt}
In this section, we propose an iterative algorithm to solve  problem \eqref{ar-opt}  for systems without RIS. 
Thus, we only optimize over the transmit covariance matrices $\{\mathbf{P}\}$ and $\mathbf{r}_c$.
In \eqref{ar-opt}, the constraints \eqref{4-b}, \eqref{4-c}, and \eqref{4-d} are linear in $\mathbf{r}_c$. 
Additionally, the rate of each user, given by \eqref{rate-eq}, is linear in $\mathbf{r}_c$. Thus, the optimization problem \eqref{ar-opt} is linear in $\mathbf{r}_c$ for fixed transmit covariance matrices.
Furthermore, the total energy/power consumption is linear in $\{\mathbf{P}\}$.
However, the optimization problem is not convex in $\{\mathbf{P}\}$ since the rates are not concave. To solve \eqref{ar-opt}, we employ MM, which is a numerical optimization approach that is briefly described in Appendix \ref{ap=MM}. To apply MM, we have to obtain suitable lower bound surrogate functions for the rates. 
To this end, we can employ the upper bound in Lemma \ref{lem-1}  and approximate the non-concave part of the rates (i.e., $-r_{lk,p_2}$ for $r_{lk,p}$, and $-\bar{r}_{lk,c_2}$ for $\bar{r}_{lk,c}$) by affine functions, which is known as convex-concave procedure (CCP) \cite{yuille2003concave}. Note that $r_{lk,p_2}$ and $\bar{r}_{lk,c_2}$ are, respectively, given by \eqref{ne=10} and \eqref{eq=29=}. For the sake of completeness, we state the concave lower bound only for $r_{lk,p}$ since it is straightforward to apply  Lemma \ref{lem-1}  to the other rate functions, i.e., $\bar{r}_{lk,c}$. 
\begin{corollary}\label{col1}
A concave lower bound only for $r_{lk,p}$ is
\begin{multline*}%{equation}
\label{l-r-lk-p}
 r_{lk,p}\geq \tilde{r}_{lk,p}=%r_{lk,p_1}\left(\{\mathbf{P}\},
% \{\bm{\Theta}^{(t-1)}\}\right) 
r_{lk,p_1}\left(\{\mathbf{P}\}\right) 
-r_{lk,p_2}^{(t-1)}
\\
-\sum_{j=1,\neq k}^{K}%\frac{\partial\tilde{r}_{lk,2}}{\partial\mathbf{P}_{ij}}
\text{\em Tr}\left(
\frac{
\underline{\mathbf{H}}_{lk,l}^T
(\mathbf{D}_{lk}^{(t-1)})^{-1}
%\left(\{\mathbf{P}\}, \{\bm{\Theta}^{(t-1)}\}\right) 
\underline{\mathbf{H}}_{lk,l}
}
{2\ln 2}
\left(\mathbf{P}_{lj}-\mathbf{P}_{lj}^{(t-1)}\right)
\right)
\\
-\sum_{i=1, \neq l}^L%\frac{\partial\tilde{r}_{lk,2}}{\partial\mathbf{P}_{ij}}
\text{\em Tr}\left(
\frac{
\underline{\mathbf{H}}_{lk,i}^T
(\mathbf{D}_{lk}^{(t-1)})^{-1}
%\left(\{\mathbf{P}\}, \{\bm{\Theta}^{(t-1)}\}\right) 
\underline{\mathbf{H}}_{lk,i}
}
{2\ln 2}
\left(\mathbf{P}_{i}-\mathbf{P}_{i}^{(t-1)}\right)
\right),
\end{multline*}
where $r_{lk,p_2}^{(t-1)}=r_{lk,p_2}\left(\{\mathbf{P}^{(t-1)}\}\right)$, and $\mathbf{D}_{lk}^{(t-1)}=\mathbf{D}_{lk}\left(\{\mathbf{P}^{(t-1)}\}\right)$.
\end{corollary}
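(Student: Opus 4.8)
The plan is to treat $r_{lk,p}$ as a difference of concave functions and to linearize only the offending concave term, exactly as the convex--concave procedure prescribes. From \eqref{eq-28} we already have the DC decomposition $r_{lk,p}=r_{lk,p_1}-r_{lk,p_2}$. First I would argue that $r_{lk,p_1}$ in \eqref{ne=10} is concave as it stands: its matrix argument $\mathbf{D}_{lk}+\underline{\mathbf{H}}_{lk,l}\mathbf{P}_{lk}\underline{\mathbf{H}}_{lk,l}^T$ is an affine (linear-plus-constant) map of the full set of covariance matrices $\{\mathbf{P}_{lk},\mathbf{P}_{lj},\mathbf{P}_i\}$, and $\log\det(\cdot)$ is concave on the positive-definite cone; positive-definiteness holds here because $\underline{\mathbf{C}}_n\succ\mathbf{0}$. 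Hence the only non-concave contribution is $-r_{lk,p_2}$, where $r_{lk,p_2}=\tfrac12\log_2|\mathbf{D}_{lk}|$ is itself concave in $\{\mathbf{P}\}$ for the same reason, so its negative is convex and must be majorized.

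Next I would apply the $\log\det$ upper bound of Lemma~\ref{lem-1} to $r_{lk,p_2}$ at the current iterate $\{\mathbf{P}^{(t-1)}\}$, giving the affine majorizer
\begin{equation*}
r_{lk,p_2}\le r_{lk,p_2}^{(t-1)}+\frac{1}{2\ln2}\,\text{Tr}\!\left((\mathbf{D}_{lk}^{(t-1)})^{-1}\big(\mathbf{D}_{lk}-\mathbf{D}_{lk}^{(t-1)}\big)\right),
\end{equation*}
which is tight at $\{\mathbf{P}^{(t-1)}\}$. Negating this and adding the untouched concave term $r_{lk,p_1}$ immediately produces a lower bound $\tilde r_{lk,p}\le r_{lk,p}$.

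The remaining work is bookkeeping. I would substitute the linear expression for $\mathbf{D}_{lk}$ from the definition following \eqref{ne=10}, so that $\mathbf{D}_{lk}-\mathbf{D}_{lk}^{(t-1)}=\sum_{i\ne l}\underline{\mathbf{H}}_{lk,i}(\mathbf{P}_i-\mathbf{P}_i^{(t-1)})\underline{\mathbf{H}}_{lk,i}^T+\sum_{j\ne k}\underline{\mathbf{H}}_{lk,l}(\mathbf{P}_{lj}-\mathbf{P}_{lj}^{(t-1)})\underline{\mathbf{H}}_{lk,l}^T$, and then pull each channel matrix across the trace via the cyclic/transpose identity $\text{Tr}(\mathbf{A}\,\mathbf{H}\mathbf{B}\mathbf{H}^T)=\text{Tr}(\mathbf{H}^T\mathbf{A}\mathbf{H}\,\mathbf{B})$. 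With $\mathbf{A}=(\mathbf{D}_{lk}^{(t-1)})^{-1}$ this reproduces exactly the two summations appearing in the statement of Corollary~\ref{col1}, together with the constant $-r_{lk,p_2}^{(t-1)}$.

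Finally, concavity of $\tilde r_{lk,p}$ is immediate: it is the sum of the concave function $r_{lk,p_1}(\{\mathbf{P}\})$ and terms that are affine (indeed linear) in the covariance matrices. I would also record tightness, since it is what MM requires: at $\{\mathbf{P}\}=\{\mathbf{P}^{(t-1)}\}$ every difference term vanishes and $\tilde r_{lk,p}=r_{lk,p_1}^{(t-1)}-r_{lk,p_2}^{(t-1)}=r_{lk,p}^{(t-1)}$. The main obstacle is conceptual rather than computational: one must check that $r_{lk,p_1}$ is jointly concave in \emph{all} the covariance matrices, not merely in $\mathbf{P}_{lk}$, because the term $\mathbf{D}_{lk}$ embedded inside it also depends on $\mathbf{P}_i$ and $\mathbf{P}_{lj}$. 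Once the affineness of the matrix argument is recognized, joint concavity follows automatically and the remainder is routine trace algebra.
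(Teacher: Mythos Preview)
Your proposal is correct and follows essentially the same approach as the paper: keep the concave term $r_{lk,p_1}$ intact and linearize the concave term $r_{lk,p_2}$ via the first-order $\log\det$ upper bound of Lemma~\ref{lem-1}, which is exactly the CCP step the paper invokes. Your write-up is simply a more detailed version of the paper's terse proof, making the trace manipulations and the joint-concavity check explicit.
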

\begin{proof}
$r_{lk,p_1}\left(\{\mathbf{P}\}\right)$ is concave and thus, remains unchanged. However, $-r_{lk,p_1}\left(\{\mathbf{P}\}\right)$ is convex, which is approximated by the lower bound given in Lemma \ref{lem-1}. 
To this end, we set $\mathbf{A}=\mathbf{I}$, $\underline{\mathbf{H}}_{lk,i}=\mathbf{B}_i$, which proves the corollary. 
\end{proof}
Let us represent the surrogate functions for the rates $\bar{r}_{lk,c}$ and $r_{lk}$ by, respectively, $\tilde{r}_{lk,c}$ and $\tilde{r}_{lk}$ for all $l,k$. Additionally, we define $\tilde{r}_{l,c}$ as $\tilde{r}_{l,c}=\min_{\forall k}\left\{\tilde{r}_{lk,c}\right\}$. Substituting  the surrogate functions for the rates, $\tilde{r}_{lk}$, in $f_i$ yields the surrogate functions $\tilde{f}_i$ and leads to the surrogate optimization problem
\begin{subequations}\label{ar-opt-sur}
\begin{align}
 \underset{\{\mathbf{P}\}\in\mathcal{P},\mathbf{r}_c%\in\Omega
 }{\max}  & 
  \tilde{f}_0\left(\left\{\mathbf{P}\right\}%,\{\bm{\Theta}\}
\right) &\\
 \text{s.t.}   \,\,\,\,\,\,\,\,\,\,&  \tilde{f}_i\left(\left\{\mathbf{P}\right\}%,\{\bm{\Theta}\}
\right)\geq0,&\forall i.
  %\\ & \text{Tr}(\mathbf{P}_k)\leq P_k,&k=1,2,
\\
\label{14-c}
 &
\tilde{r}_{lk}=r_{lk,c}+\tilde{r}_{lk,p}(\{\mathbf{P}\})\geq r_{lk}^{th},
&\forall l,k,
\\\label{14-b}
 &  \sum_{k=1}^Kr_{lk,c}\leq \tilde{r}_{l,c}=\min_{\forall k}\left\{\tilde{r}_{lk,c}(\{\mathbf{P}\})\right\},
&\forall l,k,
\\
\label{14-d}
&
r_{lk,c}\geq 0,
&\forall l,k.
 \end{align}
\end{subequations}
Note that $f_i$ can be a linear function of the rates/EEs and/or transmit covariance matrices.
The optimization problem \eqref{ar-opt-sur} is jointly convex in $\{\mathbf{P}\}$ and $\mathbf{r}_c$ for spectral efficiency maximization problems such as the WMRM, WSRM as well as minimizing power for a target rate. Therefore, it  can be efficiently solved by existing numerical tools. For the optimization problems with EE metrics such as MWEEM and GEEM, the optimization problem \eqref{ar-opt-sur} falls into fractional programming problems \cite{zappone2015energy}, and  their global optimal solutions  can be obtained by Dinkelbach-based algorithms since each fractional function has a concave numerator and a linear denominator \cite{zappone2015energy}. For the sake of completeness, we shortly discuss the solution of each problem in the following subsections. Note that this optimization framework converges to a stationary point of \eqref{ar-opt} since the proposed algorithm falls into MM and fulfills the conditions in Lemma \ref{lem-sur-jadid}.

\subsection{Spectral efficiency maximization}
The most general performance metric for spectral efficiency is the achievable rate region, which gives all the possible operational points for a scheme. The achievable rate region is defined as the set of all achievable rates by the considered scheme:
\begin{equation*}
\mathcal{R}= \underset{\{\mathbf{P}\}\in\mathcal{P},\{\bm{\Theta}\}\in\mathcal{T},\mathbf{r}_c\in\Omega}{\bigcup}\left\{r_{lk}\right\}_{\forall l,k}, 
\end{equation*}
where $\Omega$ is the feasibility set for $\mathbf{r}_c$ 
\begin{equation}
\Omega=\left\{r_{lk,c}: r_{lk,c}\geq 0, \sum_{k=1}^K r_{lk,c}\leq r_{l,c},\forall l,k\right\}.
\end{equation}
Employing the rate-profile technique, we can obtain the achievable rate region by solving  the MWRM problem \cite{zeng2013transmit} 
\begin{align}\label{ar-opt-mwrm}
 \underset{\{\mathbf{P}\}\in\mathcal{P},\mathbf{r}_c\in\Omega,r
 }{\max}  & 
r&
&
 \text{s.t.}   &  r_{lk}=r_{lk,c}+r_{lk,p}\geq \lambda_{lk}r, &\,\,\forall l,k,
 \end{align}
and varying $\lambda_{lk}$s for all possible $\lambda_{lk}\geq 0$ and $\sum_{\forall l,k}\lambda_{lk}=1$. 
Substituting the rates by $\tilde{r}_{lk}$,  the following  convex problem  results
\begin{subequations}\label{ar-opt-mwrm-sur}
\begin{align}
 \underset{\{\mathbf{P}\}\in\mathcal{P},\mathbf{r}_c%\in\Omega
,r
 }{\max}\,   
  r \,\,\,\,\,\,\, \text{s.t.}   \,\,\,\,\,&\tilde{r}_{lk}=r_{lk,c}+\tilde{r}_{lk,p}\geq r,&\forall l,k,\\
  &  %\sum_{k=1}^Kr_{lk,c}\leq \tilde{r}_{l,c}=\min_{\forall k}\left\{\tilde{r}_{lk,c}\right\},&\forall l,\\
%  &
%r_{lk,c}\geq 0,&\forall l,k,
\eqref{14-b},\eqref{14-d},
 \end{align}
\end{subequations}
which can be efficiently solved. 

There are other common performance metrics for the spectral efficiency such as the weighted-sum rate, and/or the geometric mean of rates (GMR). The WSRM problem has a structure similar to \eqref{ar-opt-mwrm}. Hence, it is straightforward to solve the WSRM problem by our framework. Furthermore, it is shown in \cite[Eq. (10)]{yu2021maximizing} that maximizing the GMR is equivalent to solving a sequence of WSRM problems. Therefore, our framework can be easily applied to the GMR maximization problem as well. 

\subsection{Energy efficiency maximization}
In this subsection, we provide the solution for the GEEM problem and refer the readers to \cite{zappone2015energy, soleymani2020improper, soleymani2022improper} for solving MWEEM. 
The GEEM problem is
\begin{align}\label{opt--gee}
\underset{\{\mathbf{P}\}\in\mathcal{P},\mathbf{r}_c\in\Omega
 }{\max}\, & 
  \frac{\sum_{l,k}r_{lk}}{LKp_c+\eta\sum_{l}\text{Tr}\left(\mathbf{P}_{l}\right)}\!\!\!
&\text{s.t.}\,
r_{lk}\geq r_{lk}^{th},&\,\forall l,k.
\end{align}
 Thus, the surrogate optimization problem for the GEEM problem is 
\begin{align}\label{ar-opt-sur-gee}
 \underset{\{\mathbf{P}\}\in\mathcal{P},\mathbf{r}_c%\in\Omega
 }{\max}\,\,  & 
  \frac{\sum_{%\forall 
l,k}\tilde{r}_{lk}}{LKp_c+\eta\sum_{l}\text{Tr}\left(\mathbf{P}_{l}\right)} &
 \!\!\! \text{s.t.}   \,&  \eqref{14-c},\eqref{14-b},\eqref{14-d}.
 \end{align}
The global optimal solution of \eqref{ar-opt-sur-gee} can be obtained by iteratively solving \cite{zappone2015energy}
\begin{subequations}\label{ar-opt-sur-gee-2}
\begin{align}
 \underset{\{\mathbf{P}\}\in\mathcal{P},\mathbf{r}_c%\in\Omega
 }{\max}  & 
  \sum_{%\forall 
l,k}\tilde{r}_{lk}\!-\!\mu^{(m)}\!\!\left(\!LKp_c+\eta\sum_{l}\text{Tr}\left(\mathbf{P}_{l}\right)\!\right),\\
  \,\,\,\,\,\,\,\,\,\,\,\, \,\, \text{s.t.}   \,\,\,\,\,\,\,\,\,\,&  \eqref{14-c},\eqref{14-b},\eqref{14-d},
 \end{align}
\end{subequations}
and updating $\mu^{(m)}$ as
\begin{equation*}
\mu^{(m)}=\frac{\sum_{%\forall 
l,k}\tilde{r}_{lk}^{(m-1)}}{LKp_c+\eta\sum_{l}\text{Tr}\left(\mathbf{P}_{l}^{(m-1)}\right)}, 
\end{equation*}
where $\mathbf{P}_{lk}^{(m-1)}$ is the solution of \eqref{ar-opt-sur-gee-2} at $(m-1)$th iteration, and $\tilde{r}_{lk}^{(m-1)}$ is the value of $\tilde{r}_{lk}$ at the end of $(m-1)$th iteration. Note that this procedure is known as the Dinkelbach algorithm \cite{zappone2015energy}. It is worth indicating that $r_{lk}^{th}$s for all $l,k$ have to be chosen such that \eqref{opt--gee} is feasible.

\subsection{Total transmission power minimization}
The total transmission power minimization for a target rate can be formulated as 
\begin{align}\label{ar-opt-mttp}
 \underset{\{\mathbf{P}\}\in\mathcal{P},\mathbf{r}_c\in\Omega
 }{\min}  & 
\sum_{l}\text{Tr}\left(\mathbf{P}_{l}\right)&
 \text{s.t.}   \,\,\,&  r_{lk}\geq r_{lk}^{th}
  &\forall l,k,
 \end{align}
which is non-convex due to the structure of the rate functions.
Replacing $r_{lk}$ with $\tilde{r}_{lk}$ and $r_{lk,c}$ with $\tilde{r}_{lk,c}$, we have
\begin{align}\label{ar-opt-mttp-sur}
 \underset{\{\mathbf{P}\}\in\mathcal{P},\mathbf{r}_c%\in\Omega
 }{\min}\,\,\,\,\,\,\,\,  & 
\sum_{%\forall 
l}\text{Tr}\left(\mathbf{P}_{lk}\right)&
 \text{s.t.}   \,\,\,\,\,\,\,\,\,\,&  \eqref{14-c},\eqref{14-b},\eqref{14-d}.
 \end{align}
The problem \eqref{ar-opt-mttp-sur} is convex. Note that the target rates should be chosen such that the optimization problem \eqref{ar-opt-mttp} is feasible. Indeed, the target rate should be achievable by the employed scheme. 

\subsection{Discussions on the solution for $\mathbf{r}_c$}
In this subsection, we shortly describe the solutions by RS and mention how RS can be specialized to TIN, NOMA, multicasting and/or broadcasting. For simplicity, we consider the 1-layer RS for a single-cell SISO BC with only two users. Let us denote $r_{c}$ the common rate, $r_{c,i}$ the portion of the common rate dedicated to user $i$, $r_{p,i}$ the rate of the private message for user $i$. Based on the value of $r_{c}$, $r_{c,i}$, and $r_{p,i}$, we can have either TIN, NOMA or broadcasting as  follows:
\begin{itemize}
\item {\bf TIN}: If $r_{c}=0$, then the RS scheme is equivalent to TIN.
 
\item {\bf NOMA}: If $r_{c,1}=r_{p,2}=0$, RS is equivalent to NOMA. The reason is that, in this case, user 1 firstly decodes the signal of user 2, and then subtracts it from the received signal. However, user 2 treats the private message of user 1 as noise and decodes only the common message. Moreover, there is no private message for user 2 to be decoded, which is the same as in NOMA. 

\item {\bf Broadcasting/Multi-casting}: If $r_{p,1}=r_{p,2}=0$, there is no private message, and both users decode only the common message, which is equivalent to  broadcasting. 
\end{itemize}
Note that for $K>2$, 1-layer RS may not necessarily attain the optimal NOMA in SISO systems since NOMA involves multiple decoding and canceling signals (multi-layer SIC) while 1-layer RS has only 1 layer SIC at each user. Therefore, when $K>2$ in SISO systems, the generalized RS, which contains multiple SIC levels, is expected to outperform 1-layer RS \cite{mao2018rate}. 

\subsection{Discussions on computational complexity}
In this subsection, we discuss the computational complexity of our framework and provide an approximate upper bound for the number of required multiplications to obtain a solution for our framework.  
The exact complexity of our proposed framework depends the number of required iterations for the convergence and the computational complexities of each iteration, which highly depends on the  implementation of the framework. 
We set the maximum number of iterations of our framework to $N$. Now we discuss the complexity of solving the surrogate optimization problem in each iteration, which is provided in \eqref{ar-opt-sur}. 
Since \eqref{ar-opt-sur} includes the most general optimization problem, we consider the MWRM problem in \eqref{ar-opt-mwrm-sur}, which has a specific structure. Note that it is straightforward to extend to the computational complexity analysis to other optimization problems. 

The surrogate optimization problem \eqref{ar-opt-mwrm-sur} is convex. The number of Newton iterations to solve a convex optimization problem is proportional to the square root of the number of its constraints \cite[Chapter 11]{boyd2004convex}.  The problem \eqref{ar-opt-mwrm-sur} has $3LK$ constraints. In each Newton iteration, the surrogate functions for the private and common messages at each user should be computed, which are in total $2LK$ surrogate functions. As shown in Corollary \ref{col1}, each surrogate function includes a logarithmic and a linear part. 
Now we obtain an approximation for the number of multiplications to compute each surrogate function. 
The computational complexity of obtaining the determinant of $n\times n$ matrix is $\mathcal{O}(n^3)$. Additionally, the number of multiplication in computing the multiplication of the matrices $A_{n_1 \times n_2}$ and $B_{n_2 \times n_3}$ is $n_1n_2n_3$. Thus, the computational complexity of obtaining $\tilde{r}_{lk,c}$ can be approximated as $\mathcal{O}(N_u^3+LN^2_{BS}\left(N_u+N_{BS}\right))$. Note that the number of multiplications for computing the rate of private messages is approximately in the same order of computing the rate of common messages. Thus, the total computational complexity of solving the WMRM problem with our framework is approximately
$\mathcal{O}\left(NLK\sqrt{LK}\left(N_u^3+LN^2_{BS}\left(N_u+N_{BS}\right)\right)\right).$

\section{Extension to RIS-assisted systems}\label{sec=opt-ris}
In this section, we consider RIS-assisted systems in which we have to optimize over transmit covariance matrices and reflecting coefficients. 
Indeed, in addition to optimizing the transmission parameters, we can optimize the environment through RISs, which is mainly the new part of the proposed framework in this section comparing to the framework in Section \ref{sec-cov-opt}. Jointly optimizing the  transmit covariance matrices and reflecting coefficients entails many challenges. For instance, finding suitable surrogate functions is more complicated when optimizing over RIS components. Additionally, the  feasibility sets for RIS components are mainly non-convex, which further complicates the optimization problems. 
To tackle the challenges, we employ an AO approach to separate the optimization of transmit covariance matrices from RIS components. That is, we first optimize over transmit covariance matrices for a fixed $\{\bm{\Theta}^{(t-1)}\}$ and obtain the new $\{\mathbf{P}^{(t)}\}$. Then, we update  $\{\bm{\Theta}\}$ by fixing the transmit covariance matrices to $\{\mathbf{P}^{(t)}\}$. In the following, we describe each optimization steps in separate subsections. 

\subsection{Optimizing the transmit covariance matrices}
For fixed reflection coefficients $\{\bm{\Theta}^{(t-1)}\}$, the channels are fixed, and the optimization problems are equivalent to the problems in Section \ref{sec-cov-opt}.
Thus, we can employ the framework in Section \ref{sec-cov-opt} to optimize over the transmit covariance matrices. Due to space limitations, we do not repeat them in this subsection again and refer the reader to Section \ref{sec-cov-opt}.

\subsection{Optimizing the reflecting coefficients}
For given transmit covariance matrices $\{\mathbf{P}^{(t)}\}$, the optimization problem \eqref{ar-opt} can be written as 
\begin{subequations}\label{ar-opt-2-rf}
\begin{align}
 \underset{\{\bm{\Theta}\}\in\mathcal{T},\mathbf{r}_c
 }{\max}\,\,\,\,\,\,\,\,  & 
  f\left(%\left\{\mathbf{P}^{(t)}\right\},
\{\bm{\Theta}\}\right) %\\
&
  \text{s.t.}   \,\,\,\,\,\,\,\,\,\,& f_i\left(%\left\{\mathbf{P}^{(t)}\right\},
\{\bm{\Theta}\}\right) &\forall i,
\\
\label{4-c=}
 &&&
r_{lk}\geq r_{lk}^{th},&\forall l,k,
\\
\label{4-b=}
&&& \sum_k%{k=1}^K
r_{lk,c}\leq r_{l,c},&\forall l,\\
\label{4-d=}
&&&
r_{lk,c}\geq 0,&\forall l,k,
 \end{align}
\end{subequations}
where the constraints  \eqref{4-c=}, \eqref{4-b=}, and \eqref{4-d=} are linear in $\mathbf{r}_c$. However, the corresponding optimization problem is not convex due to the structure of the rate functions and the structure of the feasibility sets. To solve \eqref{ar-opt-2-rf}, we first obtain suitable lower-bound surrogate functions for the rates by using the inequality in Lemma \ref{lem-2}. In the following, we only write the concave lower bound for  $r_{lk,p}$ since it is straightforward to apply Lemma \ref{lem-2} to derive concave lower bounds for $r_{lk,c}$.
\begin{corollary}\label{cor2}
A concave lower bound for $r_{lk,p}$ is
\begin{multline*} 
r_{lk,p}\geq
 \hat{r}_{lk,p}=r_{lk,p}^{(t-1)}-
\frac{1}{2\ln 2}\text{\em{ Tr}}\left(
\bar{\mathbf{V}}_{lk,p}\bar{\mathbf{V}}_{lk,p}^T\bar{\mathbf{Y}}_{lk,p}^{-1}
\right)
\\
+
\frac{1}{\ln 2}%\mathfrak{R}\left\{
\text{\em{ Tr}}\left(
\bar{\mathbf{V}}_{lk,p}^T\bar{\mathbf{Y}}_{lk,p}^{-1}\mathbf{V}_{lk,p}
\right)%\right\}
-
\frac{1}{2\ln 2}\times
\\
\text{\em{ Tr}}\left(\!
(\bar{\mathbf{Y}}^{-1}_{lk,p}\!-(\bar{\mathbf{V}}_{lk,p}\bar{\mathbf{V}}^T_{lk,p}\! +\! \bar{\mathbf{Y}}_{lk,p})^{-1})^T
%\right.
%\\\left.
(\mathbf{V}_{lk,p}\mathbf{V}^T_{lk,p}\!+\!\mathbf{Y}_{lk,p})\!
\right)\!,
\end{multline*}
where $\mathbf{Y}_{lk,p}=\mathbf{D}_{lk}\left(%\!\!\{\mathbf{P}^{(t)}\},
\{\bm{\Theta}\}\right) $, $\mathbf{V}_{lk,p}=\underline{\mathbf{H}}_{lk,l}\left(\{\bm{\Theta}\}\right)\mathbf{P}_{lk}^{(t)^{1/2}}$, and
\\
 %$
\begin{align*}\bar{\mathbf{V}}_{lk,p}&=\underline{\mathbf{H}}_{lk,l}\left(\{\bm{\Theta}\}^{(t-1)}\right)\mathbf{P}_{lk}^{(t)^{1/2}}\!,\!\!  
&
\bar{\mathbf{Y}}_{lk,p}&=\mathbf{D}_{lk}\left(\!\!%\{\mathbf{P}^{(t)}\},
\{\bm{\Theta}\}^{(t-1)}\right)\!.
\end{align*}
\end{corollary}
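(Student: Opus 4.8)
The plan is to recognize that $r_{lk,p}$ has exactly the log-det-difference structure that a single application of Lemma \ref{lem-2} can minorize, mirroring how Corollary \ref{col1} handled the same rate in the covariance step. First I would rewrite the private rate of \eqref{eq-28}--\eqref{ne=10} in the compact form
\begin{equation*}
r_{lk,p}=\frac{1}{2\ln 2}\left(\ln\left|\mathbf{V}_{lk,p}\mathbf{V}_{lk,p}^T+\mathbf{Y}_{lk,p}\right|-\ln\left|\mathbf{Y}_{lk,p}\right|\right),
\end{equation*}
under the identifications $\mathbf{Y}_{lk,p}=\mathbf{D}_{lk}(\{\bm{\Theta}\})$ and $\mathbf{V}_{lk,p}=\underline{\mathbf{H}}_{lk,l}(\{\bm{\Theta}\})\mathbf{P}_{lk}^{(t)^{1/2}}$, so that $\mathbf{V}_{lk,p}\mathbf{V}_{lk,p}^T=\underline{\mathbf{H}}_{lk,l}\mathbf{P}_{lk}^{(t)}\underline{\mathbf{H}}_{lk,l}^T$ reproduces $2\ln2\cdot r_{lk,p_1}$ and $-\ln|\mathbf{Y}_{lk,p}|$ reproduces $-2\ln2\cdot r_{lk,p_2}$. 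The essential difference from Corollary \ref{col1} is that now \emph{both} $\mathbf{V}_{lk,p}$ and $\mathbf{Y}_{lk,p}$ depend on the variable $\{\bm{\Theta}\}$ through the affine channel model \eqref{ch-equ}, whereas in the covariance step only the covariances varied; this is why a tangent on the interference alone no longer suffices and the full minorizer of Lemma \ref{lem-2} is invoked.

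Second, I would apply Lemma \ref{lem-2} with the linearization point taken at the previous RIS iterate $\{\bm{\Theta}^{(t-1)}\}$, i.e., with $\bar{\mathbf{V}}_{lk,p}=\underline{\mathbf{H}}_{lk,l}(\{\bm{\Theta}^{(t-1)}\})\mathbf{P}_{lk}^{(t)^{1/2}}$ and $\bar{\mathbf{Y}}_{lk,p}=\mathbf{D}_{lk}(\{\bm{\Theta}^{(t-1)}\})$. Lemma \ref{lem-2} supplies a minorizer consisting of a constant evaluated at $\{\bm{\Theta}^{(t-1)}\}$, a cross term $\text{Tr}(\bar{\mathbf{V}}_{lk,p}^T\bar{\mathbf{Y}}_{lk,p}^{-1}\mathbf{V}_{lk,p})$ that is linear in $\mathbf{V}_{lk,p}$, and a quadratic term $-\text{Tr}\big((\bar{\mathbf{Y}}_{lk,p}^{-1}-(\bar{\mathbf{V}}_{lk,p}\bar{\mathbf{V}}_{lk,p}^T+\bar{\mathbf{Y}}_{lk,p})^{-1})(\mathbf{V}_{lk,p}\mathbf{V}_{lk,p}^T+\mathbf{Y}_{lk,p})\big)$; substituting the above identifications directly yields the displayed $\hat{r}_{lk,p}$, which is the assertion of the corollary.

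Third, I would perform the two checks that certify the statement. For tightness at $\{\bm{\Theta}^{(t-1)}\}$, I would set $\mathbf{V}_{lk,p}=\bar{\mathbf{V}}_{lk,p}$, $\mathbf{Y}_{lk,p}=\bar{\mathbf{Y}}_{lk,p}$ and verify that the three trace corrections telescope to zero, giving $\hat{r}_{lk,p}=r_{lk,p}^{(t-1)}$; the only nontrivial identity needed is $\text{Tr}\big((\bar{\mathbf{Y}}_{lk,p}^{-1}-\bar{\mathbf{Z}}^{-1})\bar{\mathbf{Z}}\big)=\text{Tr}(\bar{\mathbf{V}}_{lk,p}\bar{\mathbf{V}}_{lk,p}^T\bar{\mathbf{Y}}_{lk,p}^{-1})$ with $\bar{\mathbf{Z}}=\bar{\mathbf{V}}_{lk,p}\bar{\mathbf{V}}_{lk,p}^T+\bar{\mathbf{Y}}_{lk,p}$, which follows by expanding and cancelling the dimension terms. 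For concavity, I would note that $\bar{\mathbf{Z}}\succcurlyeq\bar{\mathbf{Y}}_{lk,p}\succ\mathbf{0}$ gives $\bar{\mathbf{Y}}_{lk,p}^{-1}-\bar{\mathbf{Z}}^{-1}\succcurlyeq\mathbf{0}$ by operator antitonicity of the inverse; hence the quadratic term is a negative semidefinite-weighted quadratic of the matrix-convex maps $\mathbf{V}_{lk,p}\mathbf{V}_{lk,p}^T$ and $\mathbf{Y}_{lk,p}$ (each a positive-semidefinite quadratic in the affine channel \eqref{ch-equ}), so it is concave in $\{\bm{\Theta}\}$, while the cross term is affine. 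Therefore $\hat{r}_{lk,p}$ is concave in $\{\bm{\Theta}\}$, as claimed.

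The main obstacle is conceptual rather than computational, and it is precisely what Lemma \ref{lem-2} absorbs: of the two log-det pieces, $-r_{lk,p_2}$ is convex and could be minorized by its own tangent, but $r_{lk,p_1}=\frac{1}{2}\log_2|\mathbf{V}_{lk,p}\mathbf{V}_{lk,p}^T+\mathbf{Y}_{lk,p}|$ is concave in its argument, so a first-order tangent would \emph{over}-estimate it and fail to give a valid lower bound. Constructing an admissible minorizer for this concave-through-a-convex-map term is the crux, and it is exactly the cross-plus-quadratic construction of Lemma \ref{lem-2}; the only care required in the corollary is to feed it the correct $\{\bm{\Theta}^{(t-1)}\}$-dependent anchors $\bar{\mathbf{V}}_{lk,p},\bar{\mathbf{Y}}_{lk,p}$. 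It is worth emphasizing that, unlike the covariance step, here the surrogate is concave directly in $\{\bm{\Theta}\}$, so the residual difficulty of the RIS subproblem lies not in the objective but in the nonconvex feasibility set $\mathcal{T}$, which is treated separately.
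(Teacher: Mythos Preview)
Your proposal is correct and follows exactly the paper's approach: the paper does not give a separate proof for Corollary~\ref{cor2} but simply states it as a direct application of Lemma~\ref{lem-2} with the identifications $\mathbf{V}=\underline{\mathbf{H}}_{lk,l}(\{\bm{\Theta}\})\mathbf{P}_{lk}^{(t)^{1/2}}$ and $\mathbf{Y}=\mathbf{D}_{lk}(\{\bm{\Theta}\})$ anchored at $\{\bm{\Theta}^{(t-1)}\}$. Your additional tightness and concavity checks are sound elaborations that the paper leaves implicit.
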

Let us denote the concave lower bounds for $r_{lk,c}$ as $\hat{r}_{lk,c}$. The surrogate optimization problem for weighted-sum and minimum-weighted rates, minimum-weighted EE and global EE maximization  problems can be formulated as
\begin{subequations}\label{ar-opt-2-rf-2}
\begin{align}
 \underset{\{\bm{\Theta}\}\in\mathcal{T},\mathbf{r}_c
 }{\max}\,\,\,\,\,\,\,\,  & 
  \hat{f}\left(\left\{\mathbf{P}^{(t)}\right\},\{\bm{\Theta}\}\right) &\\
\label{4-b=+}
 \,\,\,\,\,\,\,\,\,\,\,\, \,\, \text{s.t.}   \,\,\,\,\,\,\,\,\,\,&  \sum_{k=1}^Kr_{lk,c}\leq \hat{r}_{l,c}=\min_{%\forall 
k}\left\{\hat{r}_{lk,c}\right\},&\forall l,\\
\label{4-c=+}
 &
\hat{r}_{lk}=r_{lk,c}+\hat{r}_{lk,p}\geq r_{lk}^{th},&\forall l,k,
\\
\label{4-d=+}
&
r_{lk,c}\geq 0,&\forall l,k,
 \end{align}
\end{subequations}
where the utility function is a linear function of the rates $\hat{r}_{lk}$. The reason is that the EE of a user is a scale of the rate of the user for fixed transmit covariances. 
Note that we cannot minimize the total transmit power for a target rate when transmit covariance matrices are fixed. Thus, for the total-transmission-power-minimization problem, we maximize the minimum achievable rate of users since it allows to further decrease the transmission power in the next iteration. %of the algorithm. 

When the feasibility set for the reflecting coefficients is convex, the optimization problem \eqref{ar-opt-2-rf-2} is convex as well. Since $\mathcal{T}_U$ is a convex set, \eqref{ar-opt-2-rf-2} is convex for $\mathcal{T}_U$ and can be solved efficiently. However, we have to convexify the sets  $\mathcal{T}_I$ and $\mathcal{T}_C$.
In the following, we discuss how we can deal with the feasibility set for reflecting coefficients.
Note that whole algorithm for the feasibility set $\mathcal{T}_U$ converges to a stationary point of \eqref{ar-opt}. 

 \subsubsection{Feasibility set $\mathcal{T}_I$} In this feasibility set, we have the constraint $|\theta_{mn}|=1$ for all $m,n$, which can be written as the following two constraints
%\begin{equation*}
$|\theta_{mn}|^2\leq 1$, and %\hspace{1cm}\text{and}\hspace{1cm}
$|\theta_{mn}|^2\geq 1$. 
%\end{equation*}
The constraint $|\theta_{mn}|^2\leq 1$ is convex. However, the constraint $|\theta_{mn}|^2\geq 1$ is not convex since the function $|\theta_{mn}|^2$ is convex rather than concave. Thus, we apply Lemma \ref{lem=5} to convexify this constraint as
\begin{equation*}
|\theta_{mn}|^2\geq|\theta_{mn}^{(t-1)}|^2-2\mathfrak{R}\{\theta_{mn}^{(t-1)^*}(\theta_{mn}-\theta_{mn}^{(t-1)})\}\geq 1,
\end{equation*}
where $\theta_{mn}^{(t-1)}$ is the value of $\theta_{mn}$ at $(t-1)$-th iteration. 
We then relax this constraint to make the convergence faster as
\begin{equation}\label{+=+}
|\theta_{mn}|^2\!\geq\!|\theta_{mn}^{(t-1)}|^2\!-2\mathfrak{R}\{\theta_{mn}^{(t-1)^*}(\theta_{mn}-\theta_{mn}^{(t-1)})\}\!\geq 1-\epsilon,
\end{equation}
for all $m,n$, where $\epsilon>0$. Finally, the convex surrogate optimization problem for this feasibility set is
 \begin{subequations}\label{ar-opt-2-rf-2-i}
\begin{align}
 \underset{\{\bm{\Theta}\},\mathbf{r}_c
 }{\max}  & 
  \hat{f}\left(\left\{\mathbf{P}^{(t)}\right\},\{\bm{\Theta}\}\right) &
    \text{s.t.}  \,&  \eqref{4-b=+}\!-\!\eqref{4-d=+}, \eqref{+=+},
    \\
    &&&|\theta_{mn}|^2\leq 1\,\,\forall m,n,
 \end{align}
\end{subequations}
which can be efficiently solved. Since we relaxed the constraint $|\theta_{mn}|^2\geq 1$, the solution given by \eqref{ar-opt-2-rf-2-i}, i.e.,   $\{\hat{\bm{\Theta}}\}$ is not necessarily feasible. Thus, we project  $\{\hat{\bm{\Theta}}\}$ to the feasibility set $\mathcal{T}_I$ by normalizing $\{\hat{\bm{\Theta}}\}$ as $\{\hat{\bm{\Theta}}^{\text{new}}\}$, i.e.,
\begin{equation}\label{eq==27}
\theta_{mn}^{\text{new}}=\frac{\hat{\theta}_{mn}}{|\hat{\theta}_{mn}|}, \hspace{1cm}\forall m,n.
\end{equation}
 To ensure the convergence of our scheme, we update the reflecting coefficients as
\begin{equation}\label{eq-42}
\{\bm{\Theta}^{(t)}\}=
\left\{
\begin{array}{lcl}
\{\hat{\bm{\Theta}}^{\text{new}}\}&\text{if}&
f\left(\left\{\mathbf{P}^{(t)}\right\},\{\hat{\bm{\Theta}}^{\text{new}}\}\right)\geq
\\
&&
f\left(\left\{\mathbf{P}^{(t)}\right\},\{\bm{\Theta}^{(t-1)}\}\right)
\\
\{\bm{\Theta}^{(t-1)}\}&&\text{Otherwise}.
\end{array}
\right.
\end{equation}
By this updating rule, the algorithm generates a sequence of non-decreasing $f$, which guarantees  convergence. 

\subsubsection{Feasibility set $\mathcal{T}_{C}$} We propose a suboptimal scheme for this feasibility set by convexifying  $\mathcal{T}_{C}$. To this end, we first relax the relationship between the phase and amplitude of reflecting components, which gives us the following two constraints 
\begin{align}\label{eq=9-2}
 |\theta_{mn}|^2&\leq 1,\\
 |\theta_{mn}|^2&\geq|\theta|_{\min}^2,
\label{eq=9-20}
\end{align}
for all $m,n$.
\eqref{eq=9-2} is convex as indicated, but \eqref{eq=9-20} is not a convex constraint since  $|\theta_{mn}|^2$ is convex, instead of concave.
Thus, we find a surrogate function for $|\theta_{mn}|^2$  by Lemma \ref{lem=5} as
\begin{equation}\label{eq-50-2}
|\theta_{mn}^{(t-1)}|^2+2\mathfrak{R}\left(\theta_{mn}^{(t-1)}(\theta_{mn}-\theta_{mn}^{(t-1)})^*\right)\geq |\theta|_{\min}^2.
\end{equation}
 Substituting these constraints in \eqref{ar-opt-2-rf-2}, we have
\begin{align}\label{ar-opt-2-rf-2-i=}
 \underset{\{\bm{\Theta}\},\mathbf{r}_c
 }{\max}  & 
  \hat{f}\left(\left\{\mathbf{P}^{(t)}\right\},\{\bm{\Theta}\}\right) &
    \text{s.t.}  \,&  \eqref{4-b=+}\!-\!\eqref{4-d=+}, \eqref{eq=9-2},\eqref{eq-50-2},
    %\\
   % &&&|\theta_{mn}|^2\leq 1\,\,\forall m,n,
 \end{align}
which is a convex problem. 
The solution of \eqref{ar-opt-2-rf-2-i=} $\{\bm{\Theta}^{\star}\}$ is not necessarily feasible. Thus we project $\{\bm{\Theta}^{\star}\}$ into $\mathcal{T}_{C}$ by choosing 
\begin{equation}
\{\hat{\bm{\Theta}}^{\text{new}}\}=\mathcal{F}(\angle\{\bm{\Theta}^{\star}\}),
\end{equation}
where $\mathcal{F}$ is defined as in \eqref{eq*=*}.
We update $\{\bm{\Theta}\}$ according to \eqref{eq-42} to ensure the convergence of the scheme. 

\subsubsection{Feasibility set $\mathcal{T}_{D}$} We propose a suboptimal algorithm for the feasibility set $\mathcal{T}_{D}$. That is, we first relax the  and assume that the phases of the RIS components are continuous and can take any value similar to the feasibility set $\mathcal{T}_{I}$. Then, we solve  \eqref{ar-opt-2-rf-2-i} and  obtain $\{\hat{\bm{\Theta}}\}$. 
Now we project $\{\hat{\bm{\Theta}}\}$ to the feasibility set $\mathcal{T}_{D}$. To this end, we first normalize $\{\hat{\bm{\Theta}}\}$ according to \eqref{eq==27} and then take the closest phase to $\angle\hat{\theta}_{mn}$ in $\mathcal{T}_{D}$ to obtain $\{\hat{\bm{\Theta}}^{\text{new}}\}$. Finally, we update $\{\bm{\Theta}^{(t)}\}$ according to the rule in \eqref{eq-42} to ensure the convergence of our algorithm. 
\subsubsection{Extension to STAR-RIS}
The structure of the general problem for STAR-RIS is very similar to the case with conventional RIS. Indeed, the only difference is that some reflecting coefficients are correlated to each other according to  \eqref{model-1} or \eqref{model-2}. As indicated, the constraint \eqref{model-1} is convex. Thus, the following optimization problem is convex
\begin{align*}%\label{ar-opt-2-rf-2-i=star}
 \underset{\{\bm{\Theta}\},\mathbf{r}_c
 }{\max}  & 
  \hat{f}\left(\left\{\mathbf{P}^{(t)}\right\},\{\bm{\Theta}\}\right) &
    \text{s.t.}  \,&  \eqref{4-b=+}\!-\!\eqref{4-d=+}, 
    \\
    &&&|\theta_{mn}^t|^2+|\theta_{mn}^r|^2\leq 1\,\,\forall m,n,
 \end{align*}
which can be solved efficiently. As a result, our proposed framework converges to a stationary point of \eqref{ar-opt} for the STAR-RIS-assisted MIMO systems with the convex constraint $|\theta_{mn}^t|^2+|\theta_{mn}^r|^2\leq 1$.

The constraint $|\theta_{mn}^t|^2+|\theta_{mn}^r|^2= 1$ is not convex, but we can convexify it similar to our approach for the feasibility set $\mathcal{T}_{I}$. That is, we rewrite the constraint as the two following constraints
\begin{align}\label{eq33}
|\theta_{mn}^t|^2+|\theta_{mn}^r|^2&\leq 1,\\
|\theta_{mn}^t|^2+|\theta_{mn}^r|^2&\geq 1,
\label{eq34}
\end{align}
for all $m,n$.
The constraint \eqref{eq33} is convex as indicated. However, \eqref{eq34} is not a convex constraint since $|\theta_{mn}^t|^2$ and $|\theta_{mn}^r|^2$ are convex functions, instead of being concave. Thus, we employ Lemma \ref{lem=5} to convexify \eqref{eq34} as 
\begin{multline}
 |\theta_{mn}^t|^2+|\theta_{mn}^r|^2\geq g^{\text{S}}_{mn}\triangleq|\theta_{mn}^{t^{(t-1)}}|^2+|\theta_{mn}^{r^{(t-1)}}|^2
 \\
 +2\mathfrak{R}\left(\theta_{mn}^{t^{(t-1)}}(\theta_{mn}^t-\theta_{mn}^{t^{(t-1)}})^*
+
\theta_{mn}^{t^{(t-1)}}(\theta_{mn}^t-\theta_{mn}^{t^{(t-1)}})^*
\right)
\\
\geq 1,
\end{multline}
where $\theta_{mn}^{t^{(t-1)}}$/$\theta_{mn}^{r^{(t-1)}}$ is the value of $\theta_{mn}^{t}$/$\theta_{mn}^{r}$ at the previous step. 
To make the convergence faster, we relax it by rewriting it as 
\begin{equation}\label{cnt=+-}
g^{\text{S}}_{mn}\geq 1-\epsilon.
\end{equation}
Plugging the constraints \eqref{eq33} and \eqref{cnt=+-} into \eqref{ar-opt-2-rf-2} results in the following convex problem
\begin{align}\label{ar-opt-2-rf-2-i=star}
 \underset{\{\bm{\Theta}\},\mathbf{r}_c
 }{\max}  & 
  \hat{f}\left(\left\{\mathbf{P}^{(t)}\right\},\{\bm{\Theta}\}\right) &
    \text{s.t.}  \,&  \eqref{4-b=+}\!-\!\eqref{4-d=+}, \eqref{eq33},\eqref{cnt=+-},
 \end{align}
whose solution $\{\hat{\bm{\Theta}}^r\}$, $\{\hat{\bm{\Theta}}^t\}$ may not satisfy \eqref{model-1}. Thus, we normalize $\{\hat{\bm{\Theta}}^r\}$, $\{\hat{\bm{\Theta}}^t\}$ as in \eqref{eq==27} to obtain  $\{\hat{\bm{\Theta}}^r_{\text{new}}\}$, $\{\hat{\bm{\Theta}}^t_{\text{new}}\}$. To ensure the convergence, we update the reflecting coefficients as
in \eqref{eq-42}.
\subsection{Discussion on computational complexity analysis}
Our proposed framework for RIS-assisted systems is iterative, and each iteration consists of two steps.
The first step, is to optimize the transmit covariance matrices, which has the same computational complexity as one iteration of our proposed framework in Section \ref{sec-cov-opt}. For instance, the computational complexity of updating the transmit covariance matrices for the WMRM problem can be approximated as $\mathcal{O}\left(LK\sqrt{LK}\left(N_u^3+LN^2_{BS}\left(N_u+N_{BS}\right)\right)\right)$. To optimize RIS components, we have to solve the surrogate optimization problem \eqref{ar-opt-2-rf-2}.
In this paper, we have considered three different feasibility sets for RIS components. Additionally, we consider the STAR-RIS. 
In this subsection, we provide an approximation for optimizing the RIS components in STAR-RIS-assisted systems. Note that the algorithms for the other feasibility sets have approximately the same computation complexity. Due to a strict space restriction, we skip the computation complexity analysis for the algorithms with different feasibility sets since they can be similarly obtained.  

To update the RIS components for STAR-RIS-assisted systems, we have to solve the convex optimization problem \eqref{ar-opt-2-rf-2-i=star}, which has at least $2LK+L+2N_{RIS}M$ constraints. 
The term $L$ can be neglected, comparing to $2LK+2N_{RIS}M$. Thus, the number of Newton iterations to solve \eqref{ar-opt-2-rf-2-i=star} approximately grows with $\sqrt{LK+N_{RIS}M}$. The main complexity in each Newton iteration is to obtain the concave lower bounds for the rate expressions in Corollary \ref{cor2}, which are quadratic in channels. In other words, the number of multiplications in computing all the surrogate rate functions is considerably higher than the number of other multiplications in order to solve \eqref{ar-opt-2-rf-2-i=star}. 
Note that the constant coefficients in the lower bound in Corollary \ref{cor2} can be computed once. Hence, the main complexities are the multiplications that are repeated in each Newton iteration. 
As a result, the complexity of solving \eqref{ar-opt-2-rf-2-i=star} can be approximated as:
\begin{multline}\mathcal{O}\left(LK\sqrt{LK+N_{RIS}M}\left(LN_u\left(N_u^2+N_{BS}^2\right)\right.\right.
\\
\left.\left.+N_uN_{BS}\left(N_u+N_{BS}\right)+MN_{RIS}N_uN_{BS}\right)\right).
\end{multline}
Finally, the computational complexity of our framework to solve the MWRM problem for STAR-RIS-assisted systems can be approximated as
 \begin{multline}\mathcal{O}\left(NLK\left[\sqrt{LK+N_{RIS}M}\left(LN_u\left(N_u^2+N_{BS}^2\right)
\right.\right.\right.
\\
\left.+N_uN_{BS}\left(N_u+N_{BS}\right)
+MN_{RIS}N_uN_{BS}\right)
\\
\left.\left.+\sqrt{LK}\left(N_u^3+LN^2_{BS}\left(N_u+N_{BS}\right)\right)\right]\right).
\end{multline}

\section{Numerical results}\label{sec-num}
\begin{figure}[t!]
    \centering
\includegraphics[width=.5\textwidth]{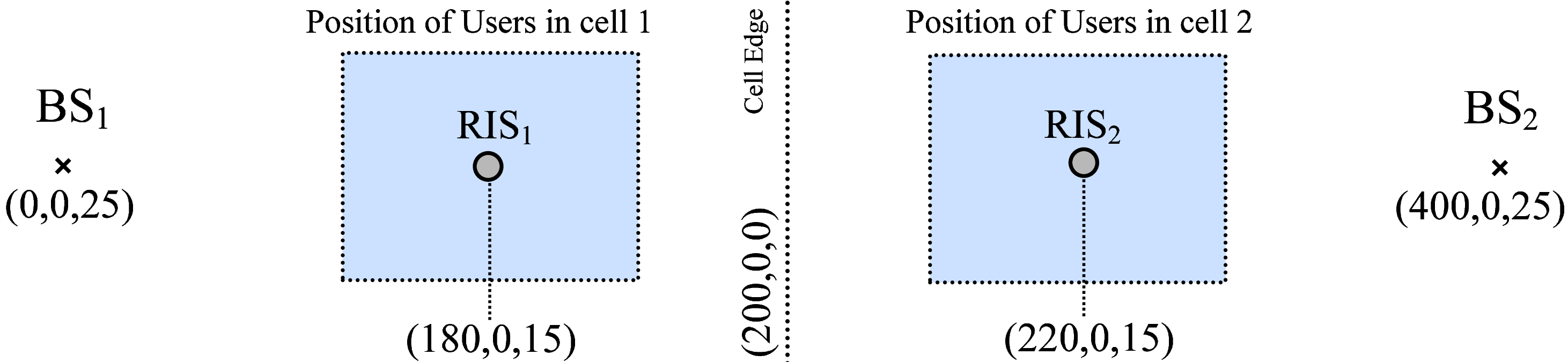}
     \caption{System topology in simulations.}
	\label{Fig-sim-model}
\end{figure}

In this section, we provide some numerical examples for WMRM, MWEEM, and for minimizing the total transmission power for a target rate. In this work, we focus on the performance of RSMA techniques in different operational regimes, analyzing the effect of different system parameters such as the number of BS/user antennas and the number of users per cell. 

In the simulations, we consider a two-cell system with one RIS in each cell. The BSs have $25$ meters height and are located at $(0,0,25)$ and $(400,0,25)$. Furthermore, RISs have $15$ meters height and are located at $(180,0,15)$ and $(220,0,15)$. 
We assume that there are $K$ users at a height of $1.5$ meters  in each cell, which are located in a square with a RIS at its center as depicted in Fig. \ref{Fig-sim-model}. The sides of the squares are $20$ meters long. The power budget for all BSs is equal to $P$. For the feasibility set $\mathcal{T}_{D}$, we assume that there are $4$ phase shifters, which means that there can be $16$ possible discrete phases. 
Other simulation  parameters are chosen based on \cite{soleymani2022improper}. 

To the best of our knowledge, this is the first work on RS in RIS-assisted MIMO systems with IQI. Hence, we compare our proposed algorithms with the IGS/PGS schemes with TIN in \cite{soleymani2022improper} as well as the suboptimal NOMA-based IGS scheme in \cite{soleymani2022noma}. Note that there is no other work on IGS in combination with NOMA in MIMO systems. Thus, we compare our algorithms with the suboptimal NOMA-based IGS scheme in \cite{soleymani2022noma}, which was proposed for multicell MISO RIS-assisted BCs. To summarize, in the simulations we consider the following transmission schemes:
\begin{itemize}
\item 
{\bf IR$_X$R} (or {\bf PR$_X$R}) refers to the IGS (or PGS) scheme with RS for $\mathcal{T}_X$, where X can be equal to U, I and C. Also, we represent
 the IGS (or PGS) scheme with RS and random reflecting coefficients by ``IR$_R$R'' (or  PR$_R$R).
\item 
{\bf IR$_U$T} (or {\bf PR$_U$T}) refers to the proposed IGS (or PGS) scheme in \cite{soleymani2022improper} for the feasibility set $\mathcal{T}_U$ with TIN. Note that this scheme can be seen as an upper bound for the IGS (or PGS) performance with TIN.
\item 
{\bf IN} (or {\bf PN}) refers to the proposed IGS (or PGS) scheme with RS, but without RIS.
\item 
{\bf PR$_{I}$R$_{\text{un}}$} refers to the PGS scheme, which is unaware of IQI, with RIS and the feasibility set $\mathcal{T}_I$.
\item 
{\bf ISRR} refers to the IGS scheme with RS for STAR-RIS assisted systems with the feasibility constraint in \cite{wu2021coverage}. 

\end{itemize}
\begin{figure}[t!]
    \centering
    \begin{subfigure}[t]{0.5\textwidth}
        \centering
\includegraphics[width=.9\textwidth]{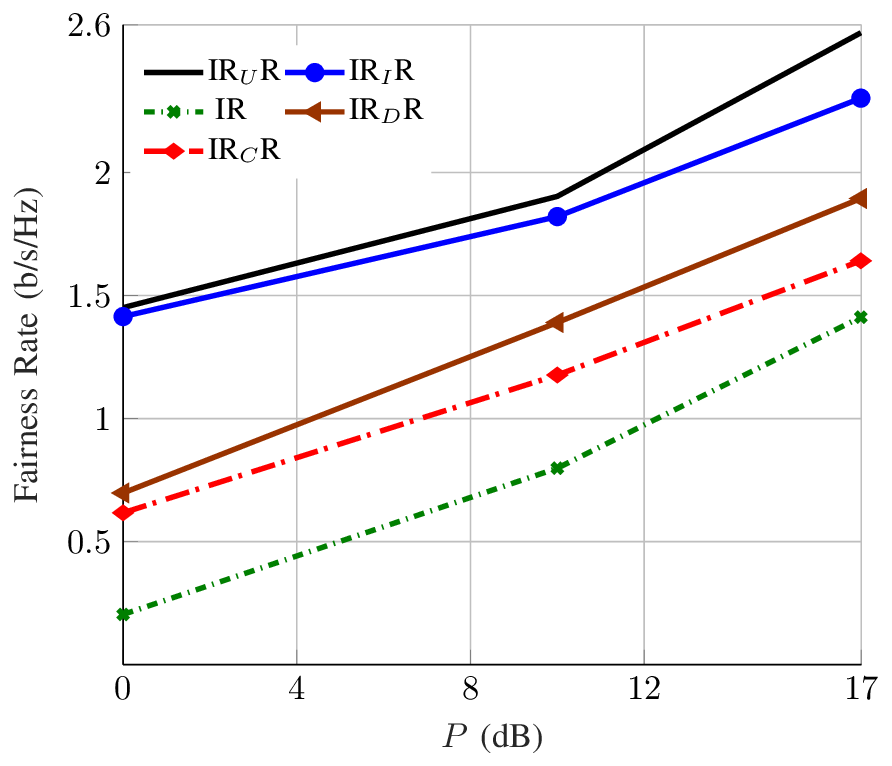}
        \caption{IGS schemes.}
    \end{subfigure}%
    \\ 
    \begin{subfigure}[t]{0.5\textwidth}
        \centering
\includegraphics[width=.9\textwidth]{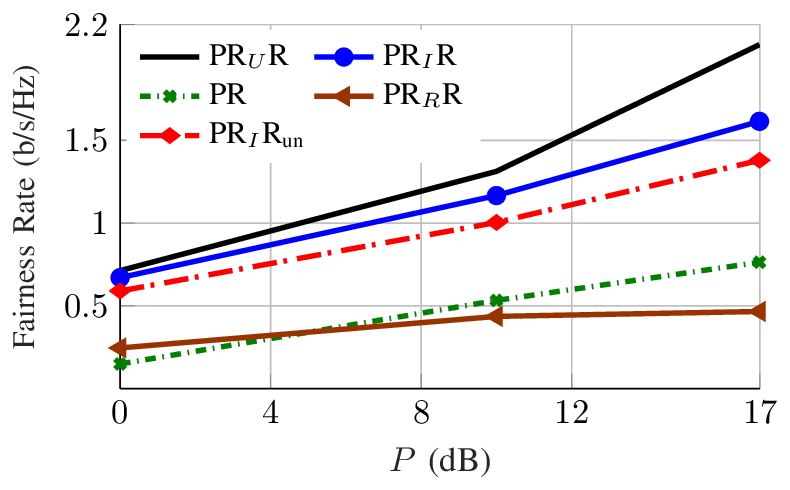}
        \caption{PGS schemes.}
    \end{subfigure}
	\\
	\begin{subfigure}[t]{0.5\textwidth}
        \centering
\includegraphics[width=.9\textwidth]{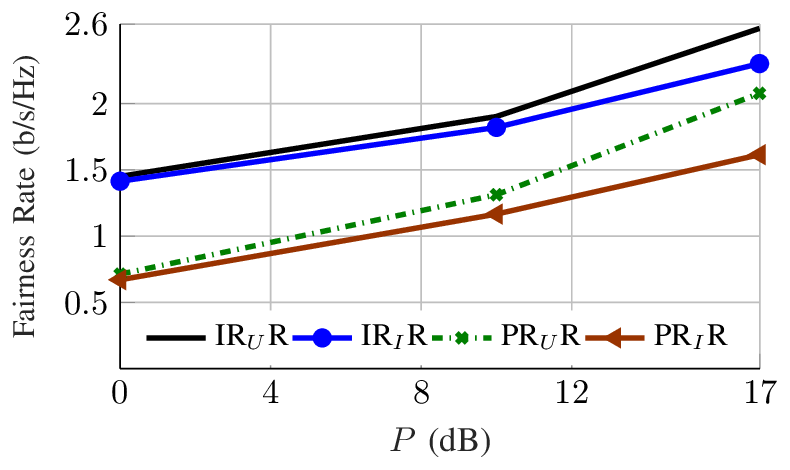}
        \caption{IGS and PGS schemes with RIS.}
    \end{subfigure}
    \caption{The average fairness rate versus $P$ for $N_{BS}=N_u=1$, $N_{RIS}=40$, $L=2$, $K=2$, $M=2$.}
	\label{Fig-rr}
\end{figure}

\subsection{Maximization of the minimum rate}
In this subsection, we consider the performance  of RS in max-min rate problems. We call the solution of MWRM problem as fairness rate since it usually results in the same rate for all the users \cite{soleymani2022improper, soleymani2022noma}.  
In the following, we first compare different 1-layer RS schemes with PGS/IGS with/without RIS. 
Then, we compare the 1-layer RS IGS/PGS schemes with TIN and the proposed algorithms in \cite{soleymani2022noma} for RIS-assisted systems. 
Finally, we compare the performance STAR-RIS with a regular RIS and traditional systems without RIS for the proposed 1-layer RS with IGS.

\subsubsection{Performance of RS}
Fig. \ref{Fig-rr} shows the average fairness rate versus $P$ for $N_{BS}=N_u=1$, $N_{RIS}=40$, $L=2$, $K=2$, $M=2$. As can be observed, RIS can considerably increase the fairness rate in both IGS and PGS schemes for all the feasibility sets when RIS components are properly optimized. Furthermore, we observe that IQI-aware schemes substantially outperform IQI-unaware PGS schemes. Fig. \ref{Fig-rr}c shows  that IGS with RS provides a much higher fairness rate than PGS schemes. We also observe that the schemes with the feasibility set $\mathcal{T}_I$  perform close to the upper bound especially when SNR is not high.   This is in line with our previous results in \cite{soleymani2022improper, soleymani2022noma}. Additionally, it can be observed that RIS with all four considered feasibility sets can considerably improve the system performance.

\begin{figure}[t!]
    \centering
\includegraphics[width=.5\textwidth]{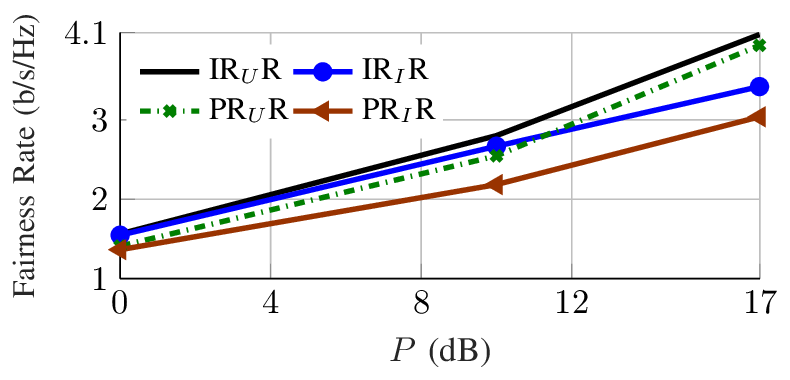}
    \caption{The average fairness rate versus $P$ for $N_{BS}=N_u=2$, $N_{RIS}=25$, $L=2$, $K=3$, $M=2$.}
	\label{Fig-rr-2} 
\end{figure}
Fig. \ref{Fig-rr-2} shows the average fairness rate versus $P$ for $N_{BS}=N_u=2$, $N_{RIS}=25$, $L=2$, $K=3$, $M=2$. As can be observed, IGS with RS outperforms the PGS scheme with RS; however, the benefits of IGS are less than in the SISO system shown  in Fig. \ref{Fig-rr}. The reason is that the number of spatial resources increases with $N_{BS}$ and $N_u$, which in turn reduces the interference level. Thus, the benefits of IGS as an interference-management tool reduce as well.

\begin{figure}[t!]
    \centering
    \begin{subfigure}[t]{0.5\textwidth}
        \centering
\includegraphics[width=.9\textwidth]{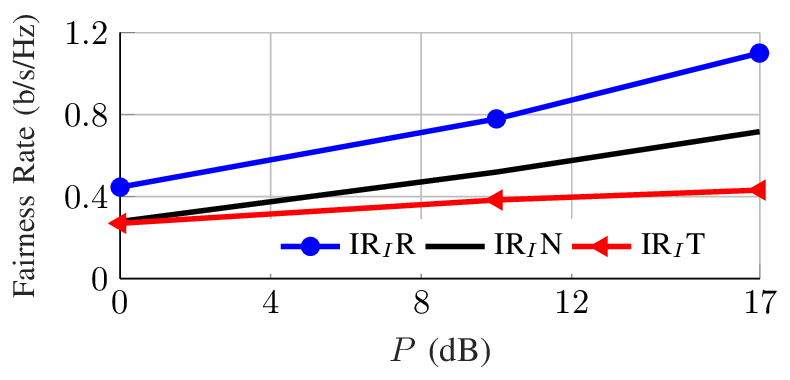}
        \caption{IGS schemes.}
    \end{subfigure}%
    \\ 
    \begin{subfigure}[t]{0.5\textwidth}
        \centering
\includegraphics[width=.9\textwidth]{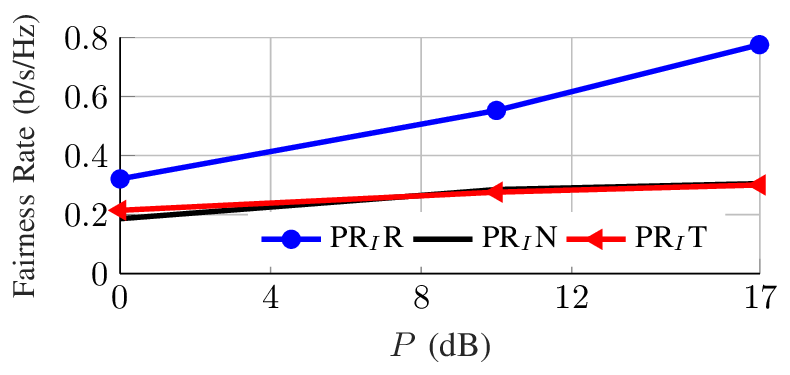}
        \caption{PGS schemes.}
    \end{subfigure}
	\\
	\begin{subfigure}[t]{0.5\textwidth}
        \centering
\includegraphics[width=.9\textwidth]{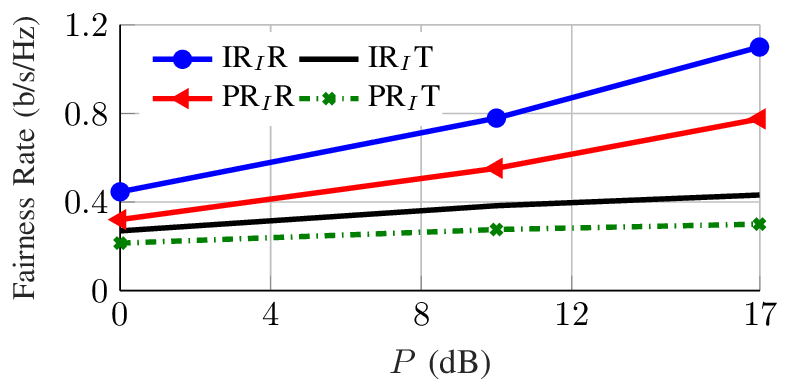}
        \caption{IGS and PGS schemes with RIS.}
    \end{subfigure}
    \caption{The average fairness rate versus $P$ for $N_{BS}=N_u=1$, $N_{RIS}=20$, $L=2$, $K=4$, $M=2$.}
	\label{Fig-rr-4} 
\end{figure}
\subsubsection{Comparison with TIN and/or NOMA-based schemes in \cite{soleymani2022noma}}
Fig. \ref{Fig-rr-4} shows the average fairness rate versus $P$ for $N_{BS}=N_u=1$, $N_{RIS}=20$, $L=2$, $K=4$, $M=2$. 
As can be observed, RS can significantly outperform TIN as well as the suboptimal NOMA-based schemes in \cite{soleymani2022noma}. It is worth emphasizing that {\em NOMA is the optimal scheme in a single-cell SISO BC with PGS and ideal devices, and RS cannot outperform NOMA in this set up}. However, it is not necessarily the case when we employ IGS and/or there is IQI at devices.  
Note that NOMA requires the optimal user ordering, which can be a difficult task in RIS-assisted systems, especially with multiple-antenna systems. Fig. \ref{Fig-rr-4}c shows that RS can significantly outperform TIN in both IGS and PGS schemes. Additionally, IGS with RS performs better than the other schemes.

\begin{figure}
        \centering
\includegraphics[width=.5\textwidth]{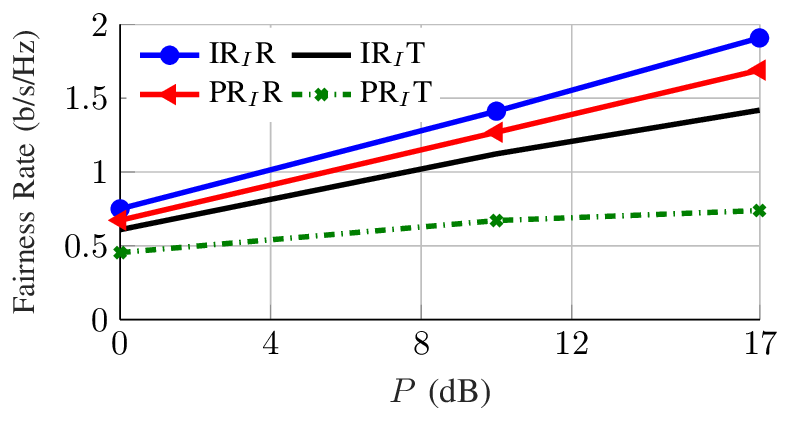}
    \caption{The average fairness rate versus $P$ for $N_{BS}=2$, $N_u=1$, $N_{RIS}=20$, $L=2$, $K=4$, $M=2$.}
	\label{Fig-rr-3} 
\end{figure}
Fig. \ref{Fig-rr-3} shows the average fairness rate versus $P$ for $N_{BS}=2$, $N_u=1$, $N_{RIS}=20$, $L=2$, $K=4$, $M=2$. As can be observed, the overall performance is similar to Fig. \ref{Fig-rr-4}. It means that RS can improve the system performance for both PGS and IGS schemes. Moreover, IGS with RS can outperform the other schemes. We also observe that the benefits of RS are much higher for PGS schemes. Indeed, since IGS is able to manage  part of interference, the benefits of RS are a bit lower, but still significant, in IGS schemes. 

\begin{figure}[t!]
    \centering
    \begin{subfigure}[t]{0.45\textwidth}
        \centering
\includegraphics[width=\textwidth]{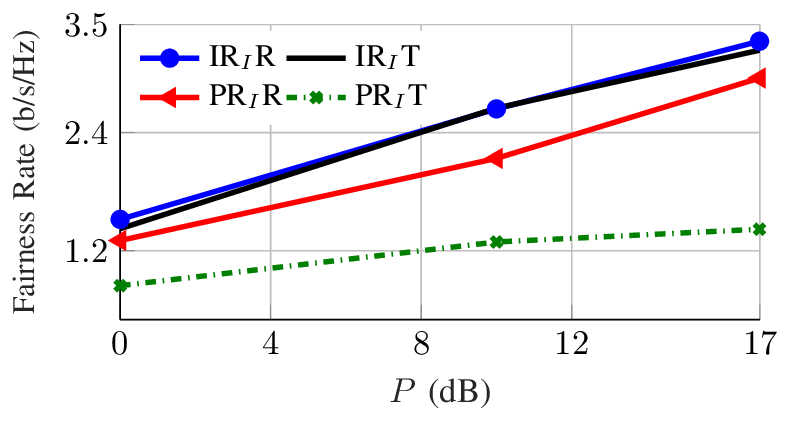}
        \caption{$K=3$.}
    \end{subfigure}%
    \\
    \begin{subfigure}[t]{0.45\textwidth}
        \centering
\includegraphics[width=\textwidth]{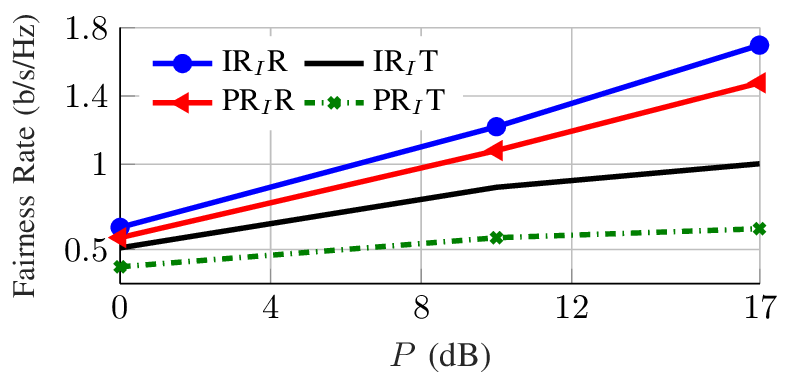}
        \caption{$K=5$.}
    \end{subfigure}
    \caption{The average fairness rate versus $P$ for $N_{BS}=N_u=2$, $N_{RIS}=15$, $L=2$,  $M=2$.}
	\label{Fig-rr2} 
\end{figure}
Fig. \ref{Fig-rr2} shows the average fairness rate versus the BS power budget  for $N_{BS}=N_u=2$, $N_{RIS}=15$, $L=2$,  $M=2$ with different number of users per cell. 
As can be observed, IGS brings a significant improvement over PGS in both RS and TIN schemes. Additionally, RS provides a considerable benefit for PGS schemes. However, the benefits of RS for IGS schemes are less than the benefits for the PGS schemes. 
The reason is that part of interference is already mitigated  by IGS, and hence, the benefits of RS as an interference-management technique, decrease in turn. 
When $K<2\max\left(N_{BS},N_u\right)$, IGS can effectively manage the interference, and there are no additional benefits by RS over IGS. However, when $K$ increases, the interference-level significantly increases, and as a result,  more advanced interference-management techniques are needed, i.e., a combination of RS and IGS. 
When $K>2\max\left(N_{BS},N_u\right)$, RS can provide significant gains in both IGS and PGS schemes, which shows the superiority of  RS in overloaded systems.    

\begin{figure}[t!]
    \centering
    \begin{subfigure}[t]{0.5\textwidth}
        \centering
        \includegraphics[width=.6\textwidth]{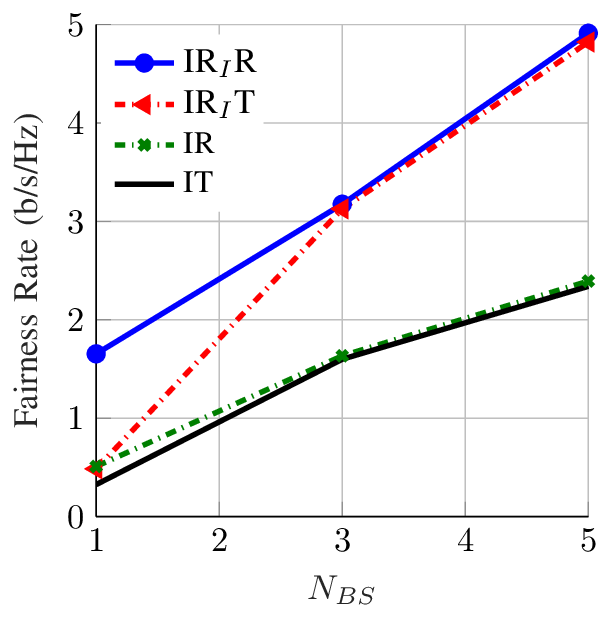}
        \caption{Average fairness rate.}
    \end{subfigure}%
    \\
    \begin{subfigure}[t]{0.5\textwidth}
        \centering
\includegraphics[width=.4\textwidth]{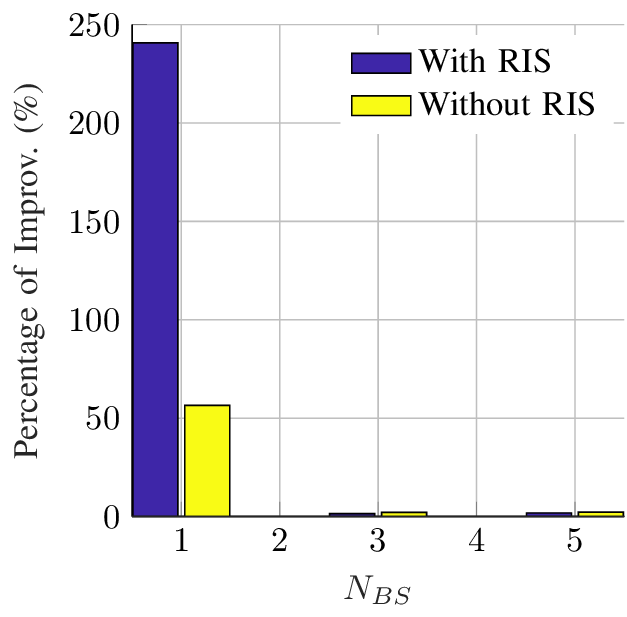}
        \caption{Benefits of RIS.}
    \end{subfigure}
    \caption{The average fairness rate versus $N_{BS}$ for $P=10$dB, $N_u=1$, $N_{RIS}=20$, $L=2$, and $M=2$.}
	\label{ris-rol}
\end{figure}
\subsubsection{Role of RIS} Fig. \ref{ris-rol} shows the average fairness rate versus $N_{BS}$  for $P=10$ dB, $N_u=1$, $N_{RIS}=20$, $L=2$, and $M=2$. As can be observed, the benefits of RS significantly decrease with $N_{BS}$ since the number of spatial resources per users increases, and the system changes from highly overloaded to underloaded as $N_{BS}$ grows. On the contrary, RIS provides a considerable gain for all $N_{BS}$ especially when it is combined with RS. In other words, RIS can improve the performance of both underloaded and overloaded systems, which is mainly due to the coverage improvements by RIS. Additionally, we observe that RIS can significantly increase the benefits of RS for $N_{BS}=1$ (and vice versa), which may indicate that RIS and RS are mutually beneficial tools  to improve performance in highly overloaded systems. Note that in our previous study \cite{soleymani2022noma}, we showed that RIS alone cannot completely  handle the undesired consequences of interference in a multicell BC. Therefore, to fully exploit the RIS benefits, other advanced interference-management techniques are necessary, which is in line with the results in Fig. \ref{ris-rol}.

\subsubsection{STAR-RIS-assisted systems} 
\begin{figure}
        \centering
\includegraphics[width=.5\textwidth]{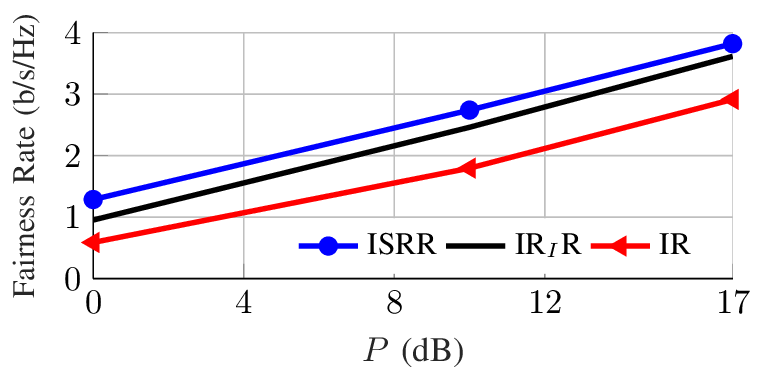}
    \caption{The average fairness rate versus $P$ for $N_{BS}=2$, $N_u=2$, $N_{RIS}=20$, $L=1$, $K=4$, and $M=1$.}
	\label{Fig-str-3} 
\end{figure}

We also consider STAR-RIS-assisted systems in the simulations. In the numerical results for STAR-RIS-assisted systems, we consider a single-cell BC with $N_{BS}=2$, $N_u=2$, $N_{RIS}=20$, $L=1$, $K=4$, and $M=1$.  We assume that the conventional RIS can assist only half of the users, while the two other users are blocked and do not receive any signal from the RIS. 
However, the STAR-RIS can cover all the users. Specifically, two users are in the transmission space of the STAR-RIS, and the other two users are in the reflection space of the STAR-RIS. 
For a fair comparison, we assume that both STAR-RIS and regular RIS have the same number of components. In this particular example, we consider the mode switching protocol for the STAR-RIS, which means that half of the components operate in the transmission mode, and the other half operates in the reflection mode. Note that there are also other protocols. However, due to a space restriction, we leave the full study of STAR-RIS for a future work. As can be observed in Fig. \ref{Fig-str-3}, STAR-RIS can outperform the regular RIS. Additionally, we observe that RIS (either regular or STAR) can drastically increase the fairness rate. 

\subsection{Minimization of the total transmit power for a target rate}
\begin{figure}[t!]
    \centering
    \begin{subfigure}[t]{0.45\textwidth}
        \centering
\includegraphics[width=\textwidth]{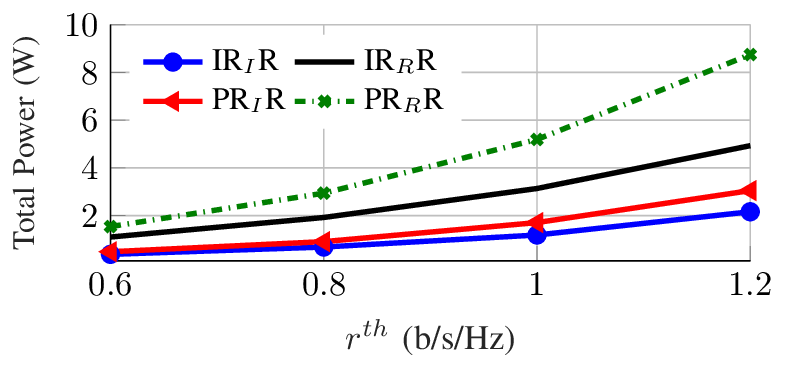}
        \caption{$K=3$.}
    \end{subfigure}%
    \\ 
    \begin{subfigure}[t]{0.45\textwidth}
        \centering
\includegraphics[width=\textwidth]{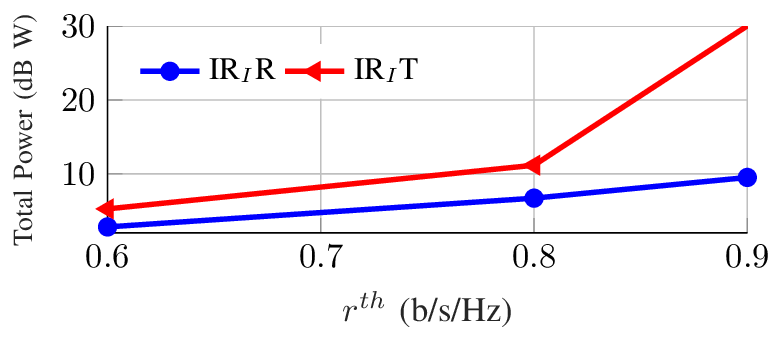}
        \caption{$K=5$.}
    \end{subfigure}
    \caption{The average minimum total transmission power versus the target rate $R_{th}$ for $N_{BS}=N_u=2$, $N_{RIS}=15$, $L=2$,  $M=2$.}
	\label{Fig-min-p}
\end{figure}
Fig. \ref{Fig-min-p} shows the  minimum average power versus the target rate $r^{th}$ for $N_{BS}=N_u=2$, $N_{RIS}=15$, $L=2$,  $M=2$. 
As can be observed, the total transmission power significantly increases with $r^{th}$. 
It should be noted that the target rate $r^{th}$ might not be necessarily achievable by a scheme for a specific channel, especially if we set a high threshold  $r^{th}$.
In Fig. \ref{Fig-min-p}, we observe that IGS with RIS and RS improves energy efficiency and substantially reduces the power consumption. Note that the scenario in Fig. \ref{Fig-min-p}b is a highly overloaded network. This figure shows that RS can be very energy efficient, especially when the threshold is set to high values. Note that the scale of the $y$-axis is in dB in Fig. \ref{Fig-min-p}b, and IGS with RS achieves 0.9 b/s/Hz with the total average power less than 10W, while IGS with TIN requires around 30dBW total power on average to reach the rate, which shows the superiority of RS over TIN from an energy efficiency point of view. 

\subsection{Maximization of the minimum EE}

\begin{figure}
        \centering
\includegraphics[width=.5\textwidth]{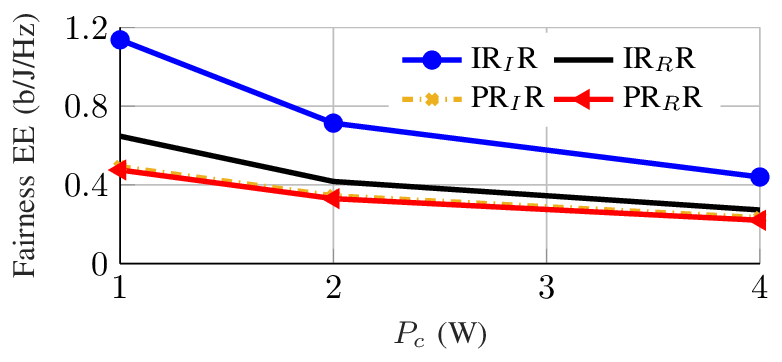}
        \caption{The average fairness EE versus $P$ for $N_{BS}=N_u=2$, $N_{RIS}=25$, $L=2$, $K=2$, $M=2$.}
	\label{Fig-ee} 
\end{figure}
In the previous subsections, we have considered several scenarios with different parameters.  It can be expected that we observe the same behavior with EE metrics. Thus, in this subsection, we consider only one scenario for the MWEEM problem. 
Fig. \ref{Fig-ee} shows the average fairness EE versus $P$ for $N_{BS}=N_u=2$, $N_{RIS}=25$, $L=2$, $K=2$, $M=2$.
As can be observed, IGS with RS outperforms the PGS scheme with RS. Moreover, it is clear that the IGS scheme with random reflecting coefficients does not provide any gain over PGS, which shows the importance of optimizing the reflecting coefficients.

\subsection{Summary}
Our main findings in the numerical section can be summarized as follows:
\begin{itemize}
\item IGS schemes always improve the system performance over PGS schemes with both RS and TIN because of two reasons. First, IGS can compensate IQI. Second, IGS is in addition an efficient interference-management technique.

\item RS when combined with IGS may substantially improve the system performance with different performance metrics specially when the system is highly overloaded. A metric for measuring the overload of a system can be defined as the comparison between the number of users per cell and the maximum number of transmit/receive antennas. If $K>\max(N_{BS},N_u)$, the system is considered as overloaded.

\item The benefits of RS are higher in PGS schemes than in IGS schemes. The reason is that part of the  interference can be managed by IGS, which reduces the effective interference level, and in turn, yields less benefits by RS as an interference-management technique. 

\item The more overloaded a system is, the more benefits by the RS in combination with IGS.  In other words, the benefits of the RS with IGS scheme increase with the number of users per cell and decrease with the number of transmit/receive antennas.
\end{itemize}

\section{Conclusion and future work}\label{sec-con}
In this paper, we have proposed a general framework for RS in MIMO RIS-assisted systems. 
The optimization framework yields a stationary point of every optimization problem in which the objective and/or constraints are linear functions of rates and/or transmit covariance matrices when the feasibility set of the RIS components is convex. 
These optimization problems include, e.g., weighted-minimum and weighted-sum rate maximization, total transmit power minimization, weighted-minimum EE and global EE maximization.  Additionally, this framework accounts for hardware imperfections such as IQI. 
As an illustrative example, we considered a multicell MIMO RIS-assisted BC with IQI at all BSs and users, showing that RS can substantially improve the spectral and energy efficiencies of this system when the number of users per cell is higher than the maximum number of transmit and receive antennas.

As a future work, the performance of RS and IGS should be investigated in the presence of imperfect and/or statistical CSI at transceivers. Furthermore, another challenging future research line can be to consider multiple layer RS schemes.

\appendices
\section{IQI model and improper signals}\label{sec-iqi}
When there is an imbalance in in-phase and quadrature components of a device, the output signal is improper. In a zero-mean improper signal, the real and imaginary parts are correlated and/or have unequal powers \cite{schreier2010statistical, adali2011complex}.
Consider a zero-mean Gaussian random vector $\mathbf{x}$. It is called proper if its complementary variance is equal to zero, i.e., $\mathbb{E}\{\mathbf{x}\mathbf{x}^T\}=\mathbf{0}$ \cite{schreier2010statistical, adali2011complex}. Otherwise, $\mathbf{x}$ is called improper \cite{schreier2010statistical, adali2011complex}. 
To model impropriety, there are different approaches. In this paper, we employ the real-decomposition method since it is more convenient to optimize over the rate expression \cite{soleymani2020improper, soleymani2022improper}. We refer the readers to \cite[Sec. II.A]{soleymani2022improper} for more details on the real-decomposition method to model impropriety in MIMO systems.

IQI can be modeled as a widely linear transformation \cite{javed2019multiple, soleymani2020improper}, which means that the output signal is a linear transformation of the input signal and its conjugate \cite{schreier2010statistical, adali2011complex}.  
In this paper, we employ the IQI model  for MIMO systems in \cite{javed2019multiple}, which was later used in \cite{soleymani2020improper, soleymani2022improper}. For the sake of completeness, we briefly restate the model here and refer the reader to \cite{javed2019multiple, soleymani2020improper, soleymani2022improper} for more details. 
Consider a typical point-to-point MIMO system with $N_t$ transmit antennas and $N_r$ receive antennas with IQI at the transceivers. 
The received signal can be written as \cite[Eq. (6)]{soleymani2022improper}
\begin{multline}\label{eq-1}
\mathbf{y}=\mathbf{\Gamma}_{r,1}\left[\mathbf{H}\left(\mathbf{\Gamma}_{t,1}\mathbf{x}+\mathbf{\Gamma}_{t,2}\mathbf{x}^*\right)+\mathbf{r}\right]
\\
+\mathbf{\Gamma}_{r,2}\left[\mathbf{H}\left(\mathbf{\Gamma}_{t,1}\mathbf{x}+\mathbf{\Gamma}_{t,2}\mathbf{x}^*\right)+\mathbf{r}\right]^*,
\end{multline}
where the parameters are defined in \cite[Eqs. (7)-(10)]{soleymani2022improper}.
The transmitter (receiver) is ideal if $\mathbf{\Gamma}_{t,1}=\mathbf{I}$ ($\mathbf{\Gamma}_{r,1}=\mathbf{I}$) and $\mathbf{\Gamma}_{t,2}=\mathbf{0}$ ($\mathbf{\Gamma}_{r,2}=\mathbf{0}$). In the following lemma, we represent the real decomposition of \eqref{eq-1}.

\begin{lemma}[\!\!\cite{soleymani2022improper}]\label{lem:Rea}
Employing the real-decomposition method, the point-to-point MIMO system with IQI can be modeled as $\underline{\mathbf{y}}=
\underline{\mathbf{H}}\,
\underline{\mathbf{x}}+\underline{\mathbf{n}},$
where $\underline{\mathbf{y}}=\left[ \begin{array}{cc}
\mathfrak{R}\{\mathbf{y}\}^T & \mathfrak{I}\{\mathbf{y}\}^T \end{array} \right]^T$,  
$\underline{\mathbf{x}}=\left[ \begin{array}{cc}
\mathfrak{R}\{\mathbf{x}\}^T & \mathfrak{I}\{\mathbf{x}\}^T \end{array} \right]^T$, and 
$\underline{\mathbf{n}}=\left[ \begin{array}{cc}
\mathfrak{R}\{\mathbf{n}\}^T & \mathfrak{I}\{\mathbf{n}\}^T \end{array} \right]^T$ are, respectively, the real decomposition of $\mathbf{y}$, $\mathbf{x}$, and $\mathbf{n}=\mathbf{\Gamma}_{r,1}\mathbf{r}+\mathbf{\Gamma}_{r,2}\mathbf{r}^*$. 
Moreover, $\underline{\mathbf{H}}$ is the equivalent channel, given by \cite[Eq. (11)]{soleymani2022improper}.
The statistics of the vector $\underline{\mathbf{n}}\in \mathbb{R}^{2N_r\times 1}$ are $\mathbb{E}\{\underline{\mathbf{n}}\}=\mathbf{0}$, and
%\begin{align}\label{noise-var}
%\mathbb{E}\{\underline{\mathbf{z}}\}&=\mathbf{0}\\
$\mathbb{E}\{\underline{\mathbf{n}}\,\underline{\mathbf{n}}^T\}=\underline{\mathbf{C}}_{n}= \underline{\mathbf{\Gamma}}\,
\underline{\mathbf{C}}_r
\underline{\mathbf{\Gamma}}^T,$
%\end{align}
where $\underline{\mathbf{C}}_r$ is the real decomposition of $\mathbf{C}_r$ and $\underline{\mathbf{\Gamma}}$ is given by \cite[Eq. (13)]{soleymani2022improper}. 
\end{lemma}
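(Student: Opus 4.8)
The plan is to derive the real model directly from the complex IQI description in \eqref{eq-1}, exploiting the fact that \eqref{eq-1} is \emph{widely linear} in $\mathbf{x}$ and that the real-decomposition map $\mathbf{v}\mapsto\underline{\mathbf{v}}=[\mathfrak{R}\{\mathbf{v}\}^T,\,\mathfrak{I}\{\mathbf{v}\}^T]^T$ converts complex widely linear operations into ordinary real matrix products. First I would collect the right-hand side of \eqref{eq-1} into the canonical widely linear form. Writing $\mathbf{z}=\mathbf{H}(\mathbf{\Gamma}_{t,1}\mathbf{x}+\mathbf{\Gamma}_{t,2}\mathbf{x}^*)+\mathbf{r}$, the receiver IQI is $\mathbf{y}=\mathbf{\Gamma}_{r,1}\mathbf{z}+\mathbf{\Gamma}_{r,2}\mathbf{z}^*$; expanding $\mathbf{z}$ and $\mathbf{z}^*$ and grouping the coefficients of $\mathbf{x}$, of $\mathbf{x}^*$, and of the noise separately yields
\begin{equation*}
\mathbf{y}=\mathbf{A}\mathbf{x}+\mathbf{B}\mathbf{x}^*+\mathbf{n},
\end{equation*}
with $\mathbf{A}=\mathbf{\Gamma}_{r,1}\mathbf{H}\mathbf{\Gamma}_{t,1}+\mathbf{\Gamma}_{r,2}\mathbf{H}^*\mathbf{\Gamma}_{t,2}^*$, $\mathbf{B}=\mathbf{\Gamma}_{r,1}\mathbf{H}\mathbf{\Gamma}_{t,2}+\mathbf{\Gamma}_{r,2}\mathbf{H}^*\mathbf{\Gamma}_{t,1}^*$, and $\mathbf{n}=\mathbf{\Gamma}_{r,1}\mathbf{r}+\mathbf{\Gamma}_{r,2}\mathbf{r}^*$, which already matches the effective-noise definition in the statement.

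Next I would apply the two elementary identities of the real-decomposition map. For any complex $\mathbf{M},\mathbf{v}$ one has $\underline{\mathbf{M}\mathbf{v}}=\underline{\mathbf{M}}\,\underline{\mathbf{v}}$, where $\underline{\mathbf{M}}$ is the $2\times2$ block matrix carrying $\mathfrak{R}\{\mathbf{M}\}$ on its diagonal blocks and $\mp\mathfrak{I}\{\mathbf{M}\}$ on its off-diagonal blocks; and conjugation acts as $\underline{\mathbf{v}^*}=\mathbf{J}\,\underline{\mathbf{v}}$, where $\mathbf{J}=\text{diag}(\mathbf{I},-\mathbf{I})$ is the block sign matrix. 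Combining them gives $\underline{\mathbf{M}\mathbf{v}^*}=\underline{\mathbf{M}}\,\mathbf{J}\,\underline{\mathbf{v}}$. Applying these to $\mathbf{y}=\mathbf{A}\mathbf{x}+\mathbf{B}\mathbf{x}^*+\mathbf{n}$ produces $\underline{\mathbf{y}}=\underline{\mathbf{A}}\,\underline{\mathbf{x}}+\underline{\mathbf{B}}\,\mathbf{J}\,\underline{\mathbf{x}}+\underline{\mathbf{n}}=\underline{\mathbf{H}}\,\underline{\mathbf{x}}+\underline{\mathbf{n}}$, with $\underline{\mathbf{H}}\triangleq\underline{\mathbf{A}}+\underline{\mathbf{B}}\mathbf{J}$, which is exactly the equivalent channel of \cite[Eq. (11)]{soleymani2022improper}. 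This settles the input--output relation.

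For the noise statistics I would observe that $\mathbf{n}=\mathbf{\Gamma}_{r,1}\mathbf{r}+\mathbf{\Gamma}_{r,2}\mathbf{r}^*$ is itself widely linear in $\mathbf{r}$, so the same two identities give $\underline{\mathbf{n}}=\underline{\mathbf{\Gamma}}\,\underline{\mathbf{r}}$ with $\underline{\mathbf{\Gamma}}\triangleq\underline{\mathbf{\Gamma}_{r,1}}+\underline{\mathbf{\Gamma}_{r,2}}\mathbf{J}$, reproducing \cite[Eq. (13)]{soleymani2022improper}. Since $\mathbf{r}$ is zero-mean, $\mathbb{E}\{\underline{\mathbf{n}}\}=\underline{\mathbf{\Gamma}}\,\mathbb{E}\{\underline{\mathbf{r}}\}=\mathbf{0}$, and by linearity of expectation
\begin{equation*}
\mathbb{E}\{\underline{\mathbf{n}}\,\underline{\mathbf{n}}^T\}=\underline{\mathbf{\Gamma}}\,\mathbb{E}\{\underline{\mathbf{r}}\,\underline{\mathbf{r}}^T\}\,\underline{\mathbf{\Gamma}}^T=\underline{\mathbf{\Gamma}}\,\underline{\mathbf{C}}_r\,\underline{\mathbf{\Gamma}}^T,
\end{equation*}
because $\underline{\mathbf{C}}_r=\mathbb{E}\{\underline{\mathbf{r}}\,\underline{\mathbf{r}}^T\}$ is by definition the real decomposition of $\mathbf{C}_r$. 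This yields the claimed first- and second-order statistics.

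The main obstacle — in fact essentially the only place demanding care — is the bookkeeping of conjugation under the real-decomposition map: the identity $\underline{\mathbf{M}\mathbf{v}^*}=\underline{\mathbf{M}}\,\mathbf{J}\,\underline{\mathbf{v}}$ must be threaded consistently so that the conjugate-linear contributions $\mathbf{\Gamma}_{r,2}\mathbf{z}^*$ and $\mathbf{\Gamma}_{t,2}\mathbf{x}^*$ supply the correct $\mathbf{J}$ factors and sign flips. A misplaced sign there would corrupt both $\underline{\mathbf{H}}$ and $\underline{\mathbf{\Gamma}}$. Everything else is mechanical, following from the homomorphism property $\underline{\mathbf{M}_1\mathbf{M}_2}=\underline{\mathbf{M}_1}\,\underline{\mathbf{M}_2}$ of the real representation and the linearity of expectation.
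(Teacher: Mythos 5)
Your proposal is correct and takes essentially the same route as the source: this paper states Lemma \ref{lem:Rea} without proof, importing it from \cite{soleymani2022improper}, and the derivation there is precisely yours---collect \eqref{eq-1} into the canonical widely linear form $\mathbf{y}=\mathbf{A}\mathbf{x}+\mathbf{B}\mathbf{x}^*+\mathbf{n}$, then apply the real-decomposition homomorphism $\underline{\mathbf{M}\mathbf{v}}=\underline{\mathbf{M}}\,\underline{\mathbf{v}}$ together with the conjugation identity $\underline{\mathbf{v}^*}=\mathbf{J}\,\underline{\mathbf{v}}$ to get $\underline{\mathbf{H}}=\underline{\mathbf{A}}+\underline{\mathbf{B}}\mathbf{J}$ and $\underline{\mathbf{n}}=\underline{\mathbf{\Gamma}}\,\underline{\mathbf{r}}$, from which the stated first- and second-order statistics follow by linearity of expectation. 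Your sign bookkeeping for the conjugate-linear contributions is consistent throughout, so there is no gap.
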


\section{Preliminaries on majorization minimization}\label{ap=MM}
Consider the following optimization problem
%\begin{subequations}
\begin{align}\label{ar-opt-jadid}
 \underset{\{\mathbf{P}\}\in\mathcal{P}
 }{\max}\,\,  & 
  f_0\left(\{\mathbf{P}\}\right) &
 \,\,\,\,\, \,\, \text{s.t.}   \,\,\,\,\,&  f_i\left(\{\mathbf{P}\}\right)\geq0,&\forall i,
 \end{align}
%\end{subequations}
where $\{\mathbf{P}\}$ is the set of optimization parameters, and $\mathcal{P}$ is the feasibility set of the variables.
If $f_i$s for all $i$ are concave, and $\mathcal{P}$ is a convex set, the optimization problem \eqref{ar-opt-jadid} is known as convex   and can be solved in polynomial time \cite{boyd2004convex}. Unfortunately, it is not the case in most of practical systems, especially with RS and RIS. Thus, we resort to some optimization techniques to efficiently solve \eqref{ar-opt-jadid}. A powerful numerical optimization tool is majorization minimization (MM), which includes many iterative optimization techniques such as expectation-maximization (EM), sequential convex programming (SCP) and difference of convex programming (DCP) \cite{sun2017majorization}. In the following, we provide a brief review on the main idea of MM and refer the reader to \cite{sun2017majorization} for a more detailed overview of MM-based techniques.

\begin{figure}[t]
\centering
\includegraphics[width=.43\textwidth]{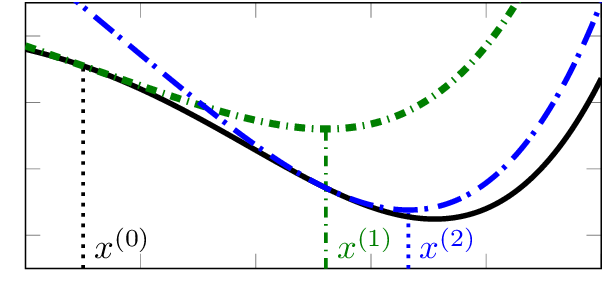}
\caption{An example of Majorization minimization.}
\label{MM-approach-fig}
\end{figure}
MM is an iterative optimization approach, which consists of two steps in each iteration: majorization and minimization. 
In the majorization step of the $t$-th iteration, the non-concave objective and/or constraint function $f_i$ is approximated by a suitable surrogate function $\tilde{f}_i^{(t)}$ that yields  a surrogate optimization problem. Then, the surrogate problem  is solved in the minimization step, which gives a new point as depicted in Fig. \ref{MM-approach-fig}. 
This procedure is iterated until a convergence metric is met. 
MM converges to a stationary point of the original problem. 
In MM, the surrogate functions are not necessarily concave, but they should fulfill three specific conditions shown in the following lemma.
\begin{lemma}[\!\!\cite{aubry2018new}]\label{lem-sur-jadid}
Assume that $\tilde{f}_i^{(t)}$, which is a surrogate function for $f_i$ at the $t$-th iteration, fulfills the following conditions:
\begin{itemize}
\item $\tilde{f}_i^{(t)}\left(\{\mathbf{P}^{(t-1)}\}\right)=f_i\left(\{\mathbf{P}^{(t-1)}\}\right)$. 
\item $\frac{\partial \tilde{f}_i^{(t)}\left(\{\mathbf{P}^{(t-1)}\}\right)}{\partial \mathbf{P}_k}
=\frac{\partial f_{i}\left(\{\mathbf{P}^{(t-1)}\}\right)}{\partial \mathbf{P}_k}$ for all $k$.
\item $\tilde{f}_i^{(t)}\leq f_i$ for the whole domain,
\end{itemize}
where $\{\mathbf{P}^{(t-1)}\}$ is the initial point at the $t$-th iteration, given by 
\begin{align}\label{ar-opt-2-jadid-2}
 \underset{\{\mathbf{P}\}\in\mathcal{P}
 }{\max}\,\,\,\,\,\,\,\,  & 
  \tilde{f}_0^{(t-1)} &
 \,\,\,\,\,\,\,\, \,\,\,\,\,\,\, \text{{\em s.t.}}  \,\,\,\,\,\,\,\,\,&  \tilde{f}_i^{(t-1)}\geq0,&\forall i.
 \end{align}
%\end{subequations}
Then, the sequence of $\{\mathbf{P}^{(t)}\}$ converges to a stationary point of \eqref{ar-opt-jadid}. %as $l\rightarrow\inf$.
\end{lemma}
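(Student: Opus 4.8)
The plan is to establish, in order: (i) that every iterate $\{\mathbf{P}^{(t)}\}$ remains feasible for the original problem \eqref{ar-opt-jadid}; (ii) that the objective sequence $\{f_0(\{\mathbf{P}^{(t)}\})\}$ is monotonically nondecreasing and hence convergent; and (iii) that any limit point satisfies the first-order optimality (KKT) conditions of \eqref{ar-opt-jadid}. First I would verify that feasibility is preserved. Since $\{\mathbf{P}^{(t)}\}$ solves the surrogate problem \eqref{ar-opt-2-jadid-2}, it satisfies $\tilde{f}_i^{(t)}(\{\mathbf{P}^{(t)}\}) \geq 0$ for all $i$; combined with the minorization condition $\tilde{f}_i^{(t)} \leq f_i$, this gives $f_i(\{\mathbf{P}^{(t)}\}) \geq \tilde{f}_i^{(t)}(\{\mathbf{P}^{(t)}\}) \geq 0$, so $\{\mathbf{P}^{(t)}\} \in \mathcal{P}$ is feasible for \eqref{ar-opt-jadid}. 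The same minorization together with the value-matching condition also shows that $\{\mathbf{P}^{(t-1)}\}$ is feasible for the surrogate problem solved at iteration $t$, because $\tilde{f}_i^{(t)}(\{\mathbf{P}^{(t-1)}\}) = f_i(\{\mathbf{P}^{(t-1)}\}) \geq 0$.

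Next I would chain the three conditions to obtain monotonicity. Because $\{\mathbf{P}^{(t)}\}$ maximizes the surrogate objective over the surrogate feasible set, and $\{\mathbf{P}^{(t-1)}\}$ is feasible for that same surrogate problem, we have $\tilde{f}_0^{(t)}(\{\mathbf{P}^{(t)}\}) \geq \tilde{f}_0^{(t)}(\{\mathbf{P}^{(t-1)}\})$. Applying the minorization bound to the left member and the value-matching identity to the right member yields
\[
f_0(\{\mathbf{P}^{(t)}\}) \geq \tilde{f}_0^{(t)}(\{\mathbf{P}^{(t)}\}) \geq \tilde{f}_0^{(t)}(\{\mathbf{P}^{(t-1)}\}) = f_0(\{\mathbf{P}^{(t-1)}\}).
\]
Thus the objective is nondecreasing. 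Under the power budget encoded in $\mathcal{P}$ the feasible set is compact and $f_0$ is continuous, so the sequence of objective values is bounded above and therefore converges, and the iterates admit at least one convergent subsequence with limit $\{\mathbf{P}^\star\} \in \mathcal{P}$.

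Finally I would transfer stationarity from the surrogate to the original problem by invoking the gradient-matching condition. At each iteration $\{\mathbf{P}^{(t)}\}$ is a maximizer of the surrogate and hence satisfies the KKT system of \eqref{ar-opt-2-jadid-2}, with a stationarity relation of the form $\nabla \tilde{f}_0^{(t)} + \sum_i \mu_i^{(t)} \nabla \tilde{f}_i^{(t)} = \mathbf{0}$ together with complementary slackness and dual feasibility. The essential observation is that the value-matching and gradient-matching conditions force $\nabla \tilde{f}_i^{(t)}(\{\mathbf{P}^{(t-1)}\}) = \nabla f_i(\{\mathbf{P}^{(t-1)}\})$ for every $i$, so once $\{\mathbf{P}^{(t)}\}$ and $\{\mathbf{P}^{(t-1)}\}$ coalesce the surrogate KKT system becomes \emph{exactly} the KKT system of \eqref{ar-opt-jadid}, identifying $\{\mathbf{P}^\star\}$ as a stationary point.

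The hard part will be rigorously closing the gap between ``convergent subsequence'' and ``fixed point of the MM map'', i.e.\ establishing that $\|\{\mathbf{P}^{(t)}\} - \{\mathbf{P}^{(t-1)}\}\| \to 0$ along the subsequence, so that the limit is genuinely a fixed point at which the gradient-matching identity can be applied. This is where one must appeal to continuity of the surrogate functions and to the vanishing of the objective gap (or, equivalently, invoke Zangwill's global convergence theorem, as done in \cite{aubry2018new}); handling non-isolated or degenerate limit points and ensuring that a constraint qualification holds so that the KKT conditions are necessary at $\{\mathbf{P}^\star\}$ are the technical subtleties that require the most care.
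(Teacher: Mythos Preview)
The paper does not supply its own proof of this lemma: it is stated as a known result and attributed to \cite{aubry2018new}, so there is no in-paper argument to compare against. Your sketch is the standard MM convergence proof---feasibility via minorization, monotone ascent via value-matching plus optimality of the surrogate subproblem, and KKT transfer via gradient-matching---and matches what one finds in the cited reference and in the broader MM literature (e.g., \cite{sun2017majorization}).

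Your identification of the delicate step is accurate: the surrogate KKT conditions hold at $\{\mathbf{P}^{(t)}\}$, whereas the gradient-matching hypothesis is stated at $\{\mathbf{P}^{(t-1)}\}$, so one must argue that the limit point $\{\mathbf{P}^\star\}$ is a fixed point of the MM map before the two gradients can be identified. The usual route is precisely the one you indicate---either show $\|\{\mathbf{P}^{(t)}\}-\{\mathbf{P}^{(t-1)}\}\|\to 0$ from the vanishing objective gap and uniform continuity on the compact feasible set, or invoke Zangwill's convergence theorem for closed point-to-set maps. One minor addition worth making explicit: your stationarity relation should also carry the normal-cone term for the convex set $\mathcal{P}$ (the power-budget and PSD constraints), since $\{\mathbf{P}^{(t)}\}$ may lie on the boundary of $\mathcal{P}$; and a constraint qualification (e.g., MFCQ or Slater for the surrogate) is needed both for the surrogate KKT conditions to hold and for the multiplier sequence $\{\mu_i^{(t)}\}$ to remain bounded so that it admits a convergent subsequence. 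With those caveats, the proposal is sound.
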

%Note that $\tilde{f}_i^{(l)}$s are not necessarily concave. We choose  $\tilde{f}_i^{(l)}$s such that we can obtain the optimal solution of the corresponding surrogate problems easier than the original problem. Indeed, the most difficult task in MM-based algorithms is to find suitable surrogate functions. Considering continuous and One of the most useful inequalities to obtain surrogate functions is to employ the first order Taylor expansion especially when $f_i$ is either convex or a difference of two convex functions. 
The most difficult task in MM algorithms is to obtain suitable surrogate functions. In wireless communications, we often deal with achievable rate, which is a logarithmic function for Gaussian signals with infinite block lengths \cite{cover2012elements}. Rates are mostly continuous-valued and differentially continuous functions, which makes the majorization task simpler. For instance, rates are either concave, convex and/or a difference of two concave functions (depending on the scenario) in transmit covariance matrices. Hence, %we can employ
it is customary to employ
the first-order Taylor expansion and the convex-concave procedure (CCP) to find suitable surrogate functions. CCP is based on the  following inequality \cite[Eq. (15)]{sun2017majorization}
\begin{equation}\label{eq04}
f_{cvc}\leq f_{l}\leq f_{cvx},
\end{equation}
%Consider a concave function $f_{cvc}$, a linear/affine function $f_{l}$, and a convex function $f_{cvx}$. The
%for all feasible points if 
where $f_{cvc}$, $f_{l}$ and $f_{cvx}$ are, respectively, concave, linear/affine and convex functions with an equal value and an equal first order derivative at an arbitrary feasible point $\mathbf{P}^{(t)}$. In the following lemmas, we state some frequently used inequalities, which are based on \eqref{eq04}. We refer the readers to \cite[Lemma 1]{soleymani2022noma} and \cite{yu2020improper, yu2020joint} for the proofs and/or further details. 
\begin{lemma}%[\!\!\cite{soleymani2022noma}]
\label{lem-1} 
%Consider $f(\mathbf{P})=\ln\left|\mathbf{A}+\mathbf{B}\mathbf{P}\mathbf{B}^T\right|,$ where $\mathbf{A}\in\mathbb{R}^{N\times N}$ and $\mathbf{B}\in\mathbb{R}^{N\times M}$ are constant matrices. Additionally, $\mathbf{A}$ and $\mathbf{P}\in\mathbb{R}^{M\times M}$ are  positive semi-definite matrices.  Then, we have following inequality for all feasible $\mathbf{P}$ 
%\begin{equation*}%\label{eq-lem-1}
%f(\mathbf{P}) \leq f(\mathbf{P}^{(t)})  + \text{\em{Tr}}\left(\mathbf{B}^T(\mathbf{A}+\mathbf{B}\mathbf{P}^{(t)}\mathbf{B}^T)^{-1}\mathbf{B}(\mathbf{P}-\mathbf{P}^{(t)})\right), 
%\end{equation*}
%where $\mathbf{P}^{(t)}$ is any feasible fixed point.
%\end{lemma}
%\begin{corollary}\label{crol-31} 
Consider $f(\{\mathbf{P}\})=\ln\left|\mathbf{A}+\sum_{i=1}^I\mathbf{B}_i\mathbf{P}_i\mathbf{B}^T_i\right|,$ where 
$\mathbf{A}\in\mathbb{R}^{N\times N}$ and $\mathbf{B}_i\in\mathbb{R}^{N\times M}$ for all $i$ are constant matrices. Additionally, $\mathbf{A}$ and $\mathbf{P}_i\in\mathbb{R}^{M\times M}$ for all $i$ are  positive semi-definite matrices. 
%
%$\mathbf{A}$, $\mathbf{B}_i$s, and $\mathbf{P}_i$s satisfy the conditions in Lemma \ref{lem-1}. 
Then, we have following inequality for all feasible $\{\mathbf{P}\}$ 
\begin{multline*}%\label{eq-lem-1}
f(\{\mathbf{P}\}) \leq f\left(\left\{\mathbf{P}^{(t)}\right\}\right) \\
 +\!\sum_{i=1}^I\! \text{\em{Tr}}\!\left(\!\mathbf{B}^T_i\left(\!\mathbf{A}+\!\!\sum_{i=1}^I\mathbf{B}_i\mathbf{P}^{(t)}_i\mathbf{B}^T_i\!\right)^{-1}\!\!\mathbf{B}_i(\mathbf{P}_i-\mathbf{P}^{(t)}_i)\!\right)\!, 
\end{multline*}
where $\left\{\mathbf{P}^{(t)}\right\}=\left\{\mathbf{P}^{(t)}_1,\mathbf{P}^{(t)}_2,\cdots,\mathbf{P}^{(t)}_I\right\}$ is any feasible fixed point.
%\end{corollary}
\end{lemma}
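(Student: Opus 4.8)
The plan is to recognize that $f(\{\mathbf{P}\})$ is a concave function of the collection $\{\mathbf{P}_i\}_{i=1}^I$ and that the claimed bound is precisely its first-order Taylor (tangent-plane) overestimator anchored at the feasible point $\{\mathbf{P}^{(t)}\}$. First I would introduce the shorthand $\mathbf{M}(\{\mathbf{P}\})=\mathbf{A}+\sum_{i=1}^I\mathbf{B}_i\mathbf{P}_i\mathbf{B}_i^T$, so that $f(\{\mathbf{P}\})=\ln|\mathbf{M}(\{\mathbf{P}\})|$. The map $\{\mathbf{P}_i\}\mapsto\mathbf{M}(\{\mathbf{P}\})$ is affine, and since each $\mathbf{P}_i\succcurlyeq\mathbf{0}$ and $\mathbf{A}\succcurlyeq\mathbf{0}$, the argument $\mathbf{M}$ stays in the positive semidefinite cone; I would restrict attention to the case where $\mathbf{M}$ is positive definite (e.g. $\mathbf{A}\succ\mathbf{0}$) so that $\ln|\mathbf{M}|$ and $\mathbf{M}^{-1}$ are well defined.

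The key analytic fact I would invoke is the standard concavity inequality for the log-determinant on the positive-definite cone: for any two symmetric positive-definite matrices $\mathbf{X}$ and $\mathbf{Y}$,
$$\ln|\mathbf{X}|\leq\ln|\mathbf{Y}|+\text{Tr}\!\left(\mathbf{Y}^{-1}(\mathbf{X}-\mathbf{Y})\right),$$
which expresses that the tangent plane to the concave map $\ln|\cdot|$ at $\mathbf{Y}$ lies above the function everywhere. I would then substitute $\mathbf{X}=\mathbf{M}(\{\mathbf{P}\})$ and $\mathbf{Y}=\mathbf{M}(\{\mathbf{P}^{(t)}\})$, using the affine structure to write $\mathbf{X}-\mathbf{Y}=\sum_{i=1}^I\mathbf{B}_i(\mathbf{P}_i-\mathbf{P}_i^{(t)})\mathbf{B}_i^T$.

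The remaining step is purely algebraic: I would insert this difference into the trace term and apply the cyclic property of the trace to move $\mathbf{B}_i^T$ to the front, turning each summand $\text{Tr}(\mathbf{M}(\{\mathbf{P}^{(t)}\})^{-1}\mathbf{B}_i(\mathbf{P}_i-\mathbf{P}_i^{(t)})\mathbf{B}_i^T)$ into $\text{Tr}(\mathbf{B}_i^T\mathbf{M}(\{\mathbf{P}^{(t)}\})^{-1}\mathbf{B}_i(\mathbf{P}_i-\mathbf{P}_i^{(t)}))$, and then sum over $i$. Since $\ln|\mathbf{Y}|=f(\{\mathbf{P}^{(t)}\})$, this reproduces exactly the stated inequality.

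I do not anticipate a serious obstacle, as the result is a routine consequence of the concavity of $\log\det$; the only point needing mild care is guaranteeing that $\mathbf{M}$ is invertible so the gradient term is well posed, which holds whenever $\mathbf{A}\succ\mathbf{0}$ or the construction otherwise ensures positive definiteness. If one prefers a self-contained derivation that avoids citing concavity as a black box, the base inequality can instead be obtained by diagonalizing $\mathbf{Y}^{-1/2}\mathbf{X}\mathbf{Y}^{-1/2}$ and applying the elementary scalar bound $\ln t\leq t-1$ to each of its eigenvalues, which is the most transparent route to the same conclusion.
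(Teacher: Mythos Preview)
Your proposal is correct and follows essentially the same approach as the paper's proof: both argue that $f$ is concave in $\{\mathbf{P}\}$ (via concavity of $\ln|\cdot|$ composed with an affine map), apply the first-order Taylor upper bound for concave functions, and identify the gradient as $\mathbf{B}_i^T\mathbf{M}(\{\mathbf{P}^{(t)}\})^{-1}\mathbf{B}_i$. Your treatment is in fact slightly more careful, explicitly handling the cyclic trace step and the invertibility caveat, and offering the elementary $\ln t\leq t-1$ eigenvalue route as a self-contained alternative.
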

\begin{proof}
Since $f(\{\mathbf{P}\})$ is concave in $\{\mathbf{P}\}$, we can obtain the following inequality by employing the first-order Taylor expansion
\begin{equation*}%\label{eq-lem-1}
f(\mathbf{P}) \leq f(\mathbf{P}^{(t)})  +\sum_i \text{{Tr}}\left(\left[\frac{\partial f(\{\mathbf{P}\})}{\partial \mathbf{P}_i}|_{\{\mathbf{P}^{(t)}\}}\right]^T(\mathbf{P}_i-\mathbf{P}^{(t)}_i)\!\right)\!, 
\end{equation*}
where $\frac{\partial f(\{\mathbf{P}\})}{\partial \mathbf{P}_i}|_{\{\mathbf{P}^{(t)}\}}$ is the derivative of $f(\{\mathbf{P}\})$ with respect to $\mathbf{P}_i$ at $\{\mathbf{P}^{(t)}\}$, which is given by
\begin{equation*}
\frac{\partial f(\{\mathbf{P}\})}{\partial \mathbf{P}}|_{\mathbf{P}^{(t)}}=\mathbf{B}^T_i\left(\!\mathbf{A}+\!\!\sum_{i=1}^I\mathbf{B}_i\mathbf{P}^{(t)}_i\mathbf{B}^T_i\!\right)^{-1}\!\!\mathbf{B}_i.
\end{equation*}
It is easy to verify that $\frac{\partial f(\mathbf{P})}{\partial \mathbf{P}}$ is symmetric, which proves the lemma.
\end{proof}
%\begin{proof}
%We can apply Lemma \ref{lem-1} sequentially to proof the corollary. That is, we first define $\mathbf{A}^{\prime}_1=\mathbf{A}+\sum_{i=2}^I\mathbf{B}_i\mathbf{P}_i\mathbf{B}^T_i$, which yields $f(\{\mathbf{P}\})=\ln\left|\mathbf{A}^{\prime}_1+\mathbf{B}_1\mathbf{P}_1\mathbf{B}^T_1\right|$. Applying Lemma \ref{lem-1}, we have
%\begin{multline*}%\label{eq-lem-1}
%f(\mathbf{P}) \leq f_1(\mathbf{P})=\ln\left|\mathbf{A}+\mathbf{B}_1\mathbf{P}^{(t)}_1\mathbf{B}^T_1+\sum_{i=2}^I\mathbf{B}_i\mathbf{P}_i\mathbf{B}^T_i\right|  \\
%+ \text{{Tr}}\left(\mathbf{B}^T_1(\mathbf{A}^{\prime}_1+\mathbf{B}_1\mathbf{P}_1\mathbf{B}^T_1)^{-1}\mathbf{B}_1(\mathbf{P}_1-\mathbf{P}^{(t)}_1)\right).
%\end{multline*}
%Now we define $\mathbf{A}^{\prime}_2=\mathbf{A}+\mathbf{B}_1\mathbf{P}^{(t)}_1\mathbf{B}^T_1+\sum_{i=3}^I\mathbf{B}_i\mathbf{P}_i\mathbf{B}^T_i$. In this case, we can again apply Lemma \ref{lem-1} to find an upper bound for $f_1(\mathbf{P})$. Repeating the procedure $I$ times proves the corollary. 
%\end{proof}
\begin{lemma}[\!\!\cite{yu2020improper, yu2020joint}]\label{lem-2} 
The following inequality holds for all  $N \times N$ Hermitian positive definite matrices $\mathbf{Y}$ and $\bar{\mathbf{Y}}$, and any arbitrary  $N \times M$ complex  matrices $\mathbf{V}$ and $\bar{\mathbf{V}}$:
\begin{multline*}%{align} 
%\nonumber
\ln \left|\mathbf{I}+\mathbf{V}\mathbf{V}^H\mathbf{Y}^{-1}\right|
%&
\geq
 \ln \left|\mathbf{I}+\bar{\mathbf{V}}\bar{\mathbf{V}}^H\bar{\mathbf{Y}}^{-1}\right|
 \\
 -
\text{{\em Tr}}\left(
\bar{\mathbf{V}}\bar{\mathbf{V}}^H\bar{\mathbf{Y}}^{-1}
\right)
+
2\mathfrak{R}\left\{\text{{\em Tr}}\left(
\bar{\mathbf{V}}^H\bar{\mathbf{Y}}^{-1}\mathbf{V}
\right)\right\}\\
%&\hspace{.4cm}
-
\text{{\em Tr}}\left(
(\bar{\mathbf{Y}}^{-1}-(\bar{\mathbf{V}}\bar{\mathbf{V}}^H + \bar{\mathbf{Y}})^{-1})^H(\mathbf{V}\mathbf{V}^H+\mathbf{Y})
\right).
%\label{lower-bound}
\end{multline*}%{align}
\end{lemma}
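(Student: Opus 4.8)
The plan is to read the left-hand side as a Gaussian mutual-information term and to construct the minorizer from its minimum-mean-square-error (MMSE) variational form, which is the standard and cleanest route for log-det lower bounds of this type. Writing $\ln|\mathbf{I}+\mathbf{V}\mathbf{V}^H\mathbf{Y}^{-1}|=\ln|\mathbf{Y}+\mathbf{V}\mathbf{V}^H|-\ln|\mathbf{Y}|$ exhibits the function as a difference of two concave log-det terms; it is therefore jointly neither convex nor concave in $(\mathbf{V},\mathbf{Y})$, so a single first-order (tangent) bound of the kind used in Lemma~\ref{lem-1} is not directly available. This is the conceptual difficulty I will return to.

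First I would apply Sylvester's identity to move to the $M\times M$ form $\ln|\mathbf{I}+\mathbf{V}^H\mathbf{Y}^{-1}\mathbf{V}|$ and introduce, for an auxiliary linear receiver $\mathbf{U}$, the mean-square-error matrix $\mathbf{E}(\mathbf{U})=(\mathbf{I}-\mathbf{U}^H\mathbf{V})(\mathbf{I}-\mathbf{U}^H\mathbf{V})^H+\mathbf{U}^H\mathbf{Y}\mathbf{U}$. The WMMSE identity then gives the exact variational representation \[ \ln\left|\mathbf{I}+\mathbf{V}^H\mathbf{Y}^{-1}\mathbf{V}\right|=\max_{\mathbf{U},\,\mathbf{W}\succ\mathbf{0}}\left\{\ln|\mathbf{W}|-\mathrm{Tr}\!\left(\mathbf{W}\mathbf{E}(\mathbf{U})\right)+M\right\}, \] whose inner objective is concave (indeed quadratic) in $\mathbf{V}$ and affine in $\mathbf{Y}$ for fixed $(\mathbf{U},\mathbf{W})$. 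Dropping the maximization and evaluating at the reference-point optimizers $\bar{\mathbf{U}}=(\bar{\mathbf{Y}}+\bar{\mathbf{V}}\bar{\mathbf{V}}^H)^{-1}\bar{\mathbf{V}}$ and $\bar{\mathbf{W}}=\mathbf{I}+\bar{\mathbf{V}}^H\bar{\mathbf{Y}}^{-1}\bar{\mathbf{V}}$ yields a valid lower bound that is tight at $(\bar{\mathbf{V}},\bar{\mathbf{Y}})$, since $\mathbf{E}(\bar{\mathbf{U}})=\bar{\mathbf{W}}^{-1}$ there. The remaining step is to simplify $\ln|\bar{\mathbf{W}}|-\mathrm{Tr}(\bar{\mathbf{W}}\mathbf{E}(\bar{\mathbf{U}}))+M$ using the Woodbury and Sylvester identities and to match it term by term with the claimed right-hand side; this is routine but lengthy determinant/trace algebra.

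As an independent check, and to pin down the obstacle, I would also carry out an elementary reduction. Setting $\mathbf{Z}=\mathbf{Y}+\mathbf{V}\mathbf{V}^H$ and $\bar{\mathbf{Z}}=\bar{\mathbf{Y}}+\bar{\mathbf{V}}\bar{\mathbf{V}}^H$, and completing the square in the cross term $2\mathfrak{R}\{\mathrm{Tr}(\bar{\mathbf{V}}^H\bar{\mathbf{Y}}^{-1}\mathbf{V})\}$ against the $\mathbf{V}\mathbf{V}^H$ pieces, one verifies that the stated right-hand side equals $T_Z-T_Y-\mathrm{Tr}(\bar{\mathbf{Y}}^{-1}(\mathbf{V}-\bar{\mathbf{V}})(\mathbf{V}-\bar{\mathbf{V}})^H)$, where $T_Z=\ln|\bar{\mathbf{Z}}|+\mathrm{Tr}(\bar{\mathbf{Z}}^{-1}(\mathbf{Z}-\bar{\mathbf{Z}}))$ and $T_Y=\ln|\bar{\mathbf{Y}}|+\mathrm{Tr}(\bar{\mathbf{Y}}^{-1}(\mathbf{Y}-\bar{\mathbf{Y}}))$ are the concavity tangents of $\ln|\mathbf{Z}|$ and $\ln|\mathbf{Y}|$. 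The claim is thus equivalent to $(T_Z-\ln|\mathbf{Z}|)-(T_Y-\ln|\mathbf{Y}|)\le\mathrm{Tr}(\bar{\mathbf{Y}}^{-1}(\mathbf{V}-\bar{\mathbf{V}})(\mathbf{V}-\bar{\mathbf{V}})^H)$, i.e. a bound on the difference of two log-det Bregman divergences.

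The hard part is precisely this coupling. Both divergences are individually nonnegative, but one cannot discard the term $-(T_Y-\ln|\mathbf{Y}|)$: doing so would leave $\ln|\mathbf{Z}|\ge T_Z-\mathrm{Tr}(\bar{\mathbf{Y}}^{-1}(\mathbf{V}-\bar{\mathbf{V}})(\mathbf{V}-\bar{\mathbf{V}})^H)$, which is already false in the scalar case when $\mathbf{Y}$ moves while $\mathbf{V}$ is held at $\bar{\mathbf{V}}$. The positive divergence generated by $\ln|\mathbf{Z}|$ must be absorbed by the negative one generated by $\ln|\mathbf{Y}|$, and this is exactly the cancellation the MMSE receiver $\bar{\mathbf{U}}$ performs automatically. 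For this reason I would present the WMMSE argument as the actual proof and use the Bregman reduction only to explain why the naive term-by-term bounding fails.
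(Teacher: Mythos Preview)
The paper does not actually prove Lemma~\ref{lem-2}; it is imported verbatim from the cited references \cite{yu2020improper, yu2020joint}, so there is no in-paper argument to compare against. Your WMMSE variational route is correct and is, in fact, the standard derivation used in those references: freezing $(\mathbf{U},\mathbf{W})$ at the reference optimizers $\bar{\mathbf{U}}=(\bar{\mathbf{Y}}+\bar{\mathbf{V}}\bar{\mathbf{V}}^H)^{-1}\bar{\mathbf{V}}$ and $\bar{\mathbf{W}}=\mathbf{I}+\bar{\mathbf{V}}^H\bar{\mathbf{Y}}^{-1}\bar{\mathbf{V}}$ and using $\bar{\mathbf{W}}\bar{\mathbf{U}}^H=\bar{\mathbf{V}}^H\bar{\mathbf{Y}}^{-1}$ together with $\bar{\mathbf{U}}\bar{\mathbf{W}}\bar{\mathbf{U}}^H=\bar{\mathbf{Y}}^{-1}-(\bar{\mathbf{Y}}+\bar{\mathbf{V}}\bar{\mathbf{V}}^H)^{-1}$ reproduces the four terms on the right-hand side exactly, and tightness plus first-order matching at the reference point are automatic.

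Your Bregman-divergence reformulation is a nice diagnostic, and your conclusion about it is the right one: the equivalent inequality $(T_Z-\ln|\mathbf{Z}|)-(T_Y-\ln|\mathbf{Y}|)\le\mathrm{Tr}\bigl(\bar{\mathbf{Y}}^{-1}(\mathbf{V}-\bar{\mathbf{V}})(\mathbf{V}-\bar{\mathbf{V}})^H\bigr)$ cannot be obtained by bounding the two divergences separately, and the WMMSE argument is what makes the cancellation explicit. If you keep that discussion, present it as commentary rather than as an independent proof attempt, since the WMMSE derivation already delivers the result without it.
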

\begin{lemma}\label{lem=5}
The following inequality holds for all $t_i$
\begin{equation}
\sum_{i=1}^I|t_i|^2\geq \sum_{i=1}^I|t_i^{(n)}|^2+2\sum_{i=1}^I\mathfrak{R}\left\{t_i^{(n)^*}(t_i-t_i^{(n)})\right\},
\end{equation}
where $t_i^{(n)}$ is any arbitrary point. 
\end{lemma}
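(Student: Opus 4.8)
The plan is to exploit the fact that the asserted inequality is separable across the index $i$ and collapses to a single elementary convexity statement. Since both sides are sums over $i=1,\dots,I$ of terms depending only on the pair $(t_i,t_i^{(n)})$, it suffices to establish the scalar inequality
\begin{equation*}
|t_i|^2 \geq |t_i^{(n)}|^2 + 2\mathfrak{R}\left\{t_i^{(n)^*}(t_i - t_i^{(n)})\right\}
\end{equation*}
for each $i$ and then add the $I$ inequalities together. No cross terms couple different indices, so this term-by-term reduction is exact.

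First I would simplify the right-hand side of the scalar inequality. Distributing the conjugate and using $\mathfrak{R}\{t_i^{(n)^*}t_i^{(n)}\}=|t_i^{(n)}|^2$, the bracketed term expands as $2\mathfrak{R}\{t_i^{(n)^*}t_i\}-2|t_i^{(n)}|^2$, so the whole right-hand side reduces to $2\mathfrak{R}\{t_i^{(n)^*}t_i\}-|t_i^{(n)}|^2$. The scalar claim is therefore equivalent to $|t_i|^2 - 2\mathfrak{R}\{t_i^{(n)^*}t_i\} + |t_i^{(n)}|^2 \geq 0$, which removes all reference to the difference $t_i-t_i^{(n)}$ and isolates a manifestly symmetric quadratic form.

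The key step is then to recognize the left-hand side of this last inequality as a perfect square. Indeed, writing out $|t_i - t_i^{(n)}|^2 = (t_i - t_i^{(n)})(t_i^* - t_i^{(n)^*})$ and using $t_i t_i^{(n)^*} + t_i^{(n)} t_i^* = 2\mathfrak{R}\{t_i^{(n)^*}t_i\}$ gives exactly $|t_i - t_i^{(n)}|^2 = |t_i|^2 - 2\mathfrak{R}\{t_i^{(n)^*}t_i\} + |t_i^{(n)}|^2$. Hence the scalar inequality is nothing but $|t_i - t_i^{(n)}|^2 \geq 0$, which is trivially true, with equality iff $t_i = t_i^{(n)}$. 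Summing over $i$ yields the lemma.

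I do not expect any genuine obstacle here: the statement is merely the assertion that the convex function $t\mapsto |t|^2$ lies above its first-order Taylor expansion at the reference point $t_i^{(n)}$, which is precisely the convex side of the CCP bound \eqref{eq04} specialized to a squared modulus. The only minor care required is the bookkeeping of the real-part operator $\mathfrak{R}\{\cdot\}$ when manipulating a real-valued function of complex arguments, and this is handled cleanly by the completing-the-square identity above rather than by formal Wirtinger differentiation.
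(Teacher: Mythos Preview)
Your proposal is correct and takes essentially the same approach as the paper: both recognize the inequality as the statement that the convex function $|t_i|^2$ lies above its first-order Taylor expansion at $t_i^{(n)}$, and you additionally make this explicit by completing the square to $|t_i - t_i^{(n)}|^2 \geq 0$.
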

\begin{proof}The function $|t_i|^2$ is convex for all $i$. Thus, we can apply the the first-order Taylor expansion and the inequality to obtain a linear lower bound for $|t_i|^2$.
\end{proof}

\section{Preliminaries on RIS}\label{sec=ap=ris}
In this appendix, we provide a short preliminary discussion  on the feasibility set of RIS components and refer the reader to \cite{pan2020multicell,soleymani2022improper} for more details. 
Additionally, we briefly present the main idea of STAR-RISs.

\subsection{Feasibility sets for the RIS components} \label{sec=ap=ris-c}

We consider the three feasibility sets in \cite{soleymani2022noma} throughout this paper.
The ideal feasibility set for the reflecting coefficients is \cite[Eq. (11)]{wu2021intelligent}
\begin{equation}
\mathcal{T}_{U}=\left\{\theta_{m_n}:|\theta_{m_n}|^2\leq 1 \,\,\,\forall m,n\right\},
\end{equation}
in which the amplitude and the phase of each component can be independently optimized  \cite{wu2021intelligent, elmossallamy2020reconfigurable, yang2020risofdm}.
This feasibility set is a convex set, but it is not practical \cite{di2020smart, wu2021intelligent}, although it serves to assess  the theoretical performance limit for RIS-assisted systems \cite{wu2021intelligent}. 
 A more practical feasibility set assumes that the amplitudes are fixed to $1$, and only the phases can be optimized, i.e.,  
\begin{equation}
\mathcal{T}_{I}=\left\{\theta_{mi}:|\theta_{mi}|= 1 \,\,\,\forall m,i\right\}.
\end{equation}
This feasibility set is very popular and has been used in many works such as \cite{di2020smart, wu2021intelligent, wu2019intelligent, kammoun2020asymptotic, yu2020joint, pan2020multicell, zhang2020intelligent}.
Another practical feasibility set assumes that the amplitude  of each RIS element is not fixed, but it is a deterministic function of its phase \cite{abeywickrama2020intelligent}. In \cite{abeywickrama2020intelligent}, the amplitude as a function of the phase is 
\begin{equation}\label{eq*=*}
\mathcal{F}(\angle \theta_{mi})= |\theta|_{\min}+( 1\!-|\theta|_{\min})\left(\!\!\frac{\sin\left(\angle \theta_{mi}-\phi\right)+1}{2}\!\right)^{\alpha}\!\!\!,
\end{equation}
where $|\theta|_{\min}$, $\alpha$, and $\phi$ are non-negative constant values.
This feasibility set can be formulated as  \cite[Eq. (7)]{soleymani2022noma}
\begin{equation}
\mathcal{T}_{C}\!=\!\left\{\theta_{mi}\!:|\theta_{mi}|= \mathcal{F}(\angle \theta_{mi}), \,\angle \theta_{mi}\in[-\pi,\pi]  \,\forall m,i\right\}\!.
\end{equation}
Finally, we consider a feasibility set in which the phases of the RIS components are discrete while the amplitude is set to 1 as 
\begin{equation}
\mathcal{T}_{D}=\left\{\theta_{mi}:|\theta_{mi}|= 1,\angle\theta_{mi}=\{\phi_1,\phi_2,\cdots,\phi_N\} \,\,\,\forall m,i\right\},
\end{equation}
where $\phi_n$s for all $n$ are the only possible phase shifts that can be tuned \cite{wu2021intelligent}.

\subsection{STAR-RIS}\label{sec=ap=ris-d}
In STAR-RIS, each RIS component can simultaneously reflect and transmit signals, which provides an additional degree of freedom in the design. The reflection and transmission components of the $i$-th element of the $m$-th RIS are denoted, respectively, by $\theta_{m_i}^{r}$ and $\theta_{m_i}^{t}$.
They are constrained as $|\theta_{m_i}^{r}|\leq 1$ and $|\theta_{m_i}^{t}|\leq 1$ \cite{mu2021simultaneously, wu2021coverage, liu2021star, xu2021star}. 
Unfortunately, $\theta_{m_i}^{r}$ and $\theta_{m_i}^{t}$ are related to each other, and their values cannot be optimized independently. There are two different models that account for  this relationship. The most common model is  \cite{wu2021coverage}, \cite[Eq. (1)]{liu2021star}
\begin{equation}\label{model-1}
|\theta_{m_i}^{r}|^2+|\theta_{m_i}^{t}|^2=1.
\end{equation}
An alternative model is \cite[Eq. (2)]{xu2021star}
\begin{equation}\label{model-2}
|\theta_{m_i}^{r}|^2+|\theta_{m_i}^{t}|^2\leq1.
\end{equation}
Clearly, \eqref{model-2} is a convex constraint, while \eqref{model-1} is not,
 which makes the optimization problems with this constraint more complicated.
 
In STAR-RIS-assisted systems, there are two spaces for each RIS: reflection space (RS) and transmission space (TS), and each user belongs to either of them \cite{mu2021simultaneously}. Thus, the channel for a user is \cite[Eq. (2)]{mu2021simultaneously}  
\begin{equation*}
\mathbf{H}_{lk,i}\left(\{\bm{\Theta}\}\right)=
\underbrace{\sum_{m=1}^M\mathbf{G}_{lk,m}\bm{\Theta}_m^{r/t}\mathbf{G}_{m,i}}_{\text{Link through RIS}}
+
\underbrace{\mathbf{F}_{lk,i}}_{\text{Direct link}}\in\mathbb{C}^{N_u\times N_{BS}},
\end{equation*}
where
\begin{align*}
\bm{\Theta}_m^r&
=\text{diag}\left(\theta_{m_1}^r, \theta_{m_2}^r,\cdots,\theta_{m_{N_{RIS}}}^r\right), \\
\bm{\Theta}_m^t&
=\text{diag}\left(\theta_{m_1}^t, \theta_{m_2}^t,\cdots,\theta_{m_{N_{RIS}}}^t\right)
\end{align*}

Three different operational modes are proposed in \cite{mu2021simultaneously} for STAR-RIS: energy splitting (ES), mode switching (MS), and time sharing (TS). In ES, all RIS components can simultaneously transmit and reflect. In MS, the RIS components can either transmit or reflect. In other words, in this mode, the RIS components are divided into two separate groups. In TS, all the RIS components periodically switch between transmission and reflection modes in different orthogonal time slots.  
Note that MS and TS are special cases of ES. Thus, in this work, we consider only ES without loss of generality.

\bibliographystyle{IEEEtran}
\bibliography{ref2}

\end{document}